\title{Derivation of Heard-Of Predicates From Elementary Behavioral Patterns}
\author{Adam Shimi \and Aurélie Hurault \and Philippe Queinnec}
\authorrunning{A. Shimi, A. Hurault, P. Queinnec} 
\institute{IRIT -- Université de Toulouse,
                    2 rue Camichel,
                    F-31000 Toulouse, France\\
           \email{\{firstname\}.\{lastname\}@irit.fr}}
\spnewtheorem*{theorem*}{Theorem}{\bfseries}{\itshape}
\spnewtheorem*{lemma*}{Lemma}{\bfseries}{\itshape}
\spnewtheorem*{corollary*}{Corollary}{\bfseries}{\itshape}
\newcommand{\combi}{\bigotimes}
\begin{document}

\maketitle

\begin{abstract}
    There are many models of distributed computing, and no unifying
    mathematical framework for considering them all.
    One way to sidestep this issue is to start
    with simple communication and fault models, and use them
    as building blocks to derive the complex models studied
    in the field. We thus define operations like union, succession
    or repetition, which makes it easier to build complex models from simple
    ones while retaining expressivity.

    To formalize this approach, we abstract away the complex models and
    operations in the Heard-Of model. This model relies on
    (possibly asynchronous) rounds; sequence of digraphs, one for each round,
    capture which messages sent at a given round are received before
    the receiver goes to the next round.
    A set of sequences, called a heard-of predicate,
    defines the legal communication behaviors --
    that is to say, a model of communication.
    Because the proposed operations behave well with this
    transformation of operational models into heard-of predicates,
    we can derive bounds, characterizations, and implementations
    of the heard-of predicates for the constructions.

    \keywords{Message-passing \and Asynchronous Rounds \and Failures \and Heard-Of Model}
\end{abstract}

\section{Introduction}
\label{sec:intro}

    \subsection{Motivation}
    \label{subsec:motiv}

      Let us start with a round-based distributed algorithm; such an
      algorithm is quite common in the literature, especially
      in fault-tolerant settings. We want to formally verify this algorithm
      using the methods of our choice: proof-assistant, model-checking,
      inductive invariants, abstract interpretation\ldots{}
      But how are we supposed to model the context in which the algorithm will
      run? Even a passing glance at the distributed computing literature
      shows a plethora of models defined in the mixture of english and
      mathematics.

      Thankfully, there are formalisms for abstracting round-based models
      of distributed computing. One of these is the Heard-Of model of Charron-Bost
      and Schiper~\cite{CharronBostHO}; it boils down the communication model
      to a description of all accepted combinations of received messages.
      Formally, this is done by considering communications graphs, one for each round, and
      taking the sets of infinite sequences of graphs that are allowed by
      the model. Such a set is called a heard-of predicate, and captures
      a communication model.

      An angle of attack for verification is therefore to find the heard-of predicate
      corresponding to a real-world environment, and use the techniques from
      the literature to verify an algorithm for this heard-of predicate.
      But which heard-of predicate should be used? What is the "right" predicate
      for a given environment? For some cases, the predicates are given
      in Charron-Bost and Schiper~\cite{CharronBostHO}; but this does not solve
      the general case.

      Actually, the answer is quite subtle. This follows from a fundamental part of the
      Heard-Of model: communication-closedness~\cite{ElradDecomp}.
      This means that for $p$ to use a message from $q$ at round $r$, $p$ must receive it
      before or during its own round $r$. And thus, knowing whether
      $p$ receives the message from $q$ at the right round or not
      depends on how $p$ waits for messages. That is,
      it depends on the specifics of how rounds are implemented on top of it.

      Once again, the literature offers a solution:
      Shimi et al.~\cite{ShimiOPODIS18} propose to first find
      a delivered predicate -- a description of which messages will eventually
      be delivered, without caring about rounds --,
      and then to derive the heard-of predicate from it. This derivation
      explicitly studies strategies, the aforementioned rules for how processes
      waits for messages before changing round.

      But this brings us back to square one: now we are looking
      for the delivered predicate corresponding to a real-world model, instead
      of the heard-of predicate.
      Basic delivered predicates for elementary failures are easy to find,
      but delivered predicates corresponding to combinations of
      failures are often not intuitive.

      In this paper, we propose a solution to this problem: building a
      complex delivered predicate from simpler ones we already know.
      For example, consider a system where one process can crash and may
      recover later, and another process can definitively crash.
      The delivered predicate for at most one crash is $PDel^{crash}_1$, and
      the predicate where all the messages are delivered is $PDel^{total}$.
      Intuitively, a process that can crash and necessarily recover is
      described by the behavior of $PDel^{crash}_1$ followed by
      the behavior of $PDel^{total}$. We call this the succession of these predicates,
      and write it $PDel^{recover}_1 \triangleq PDel^{crash}_1 \leadsto PDel^{total}$.
      In our system, the crashed process may never recover:
      hence we have either the behavior of $PDel^{recover}_1$
      or the behavior of $PDel^{crash}_1$. This amounts to a union (or a disjunction);
      we write it $PDel^{canrecover}_1 \triangleq
      PDel^{recover}_1 \cup PDel^{crash}_1$.
      Finally, we consider a potential irremediable crash, additionally to
      the previous predicate. Thus we want the behavior of $PDel^{crash}_1$ and
      the behavior of $PDel^{canrecover}_1$.
      We call it the combination (or conjunction) of these predicates, and
      write it $PDel^{crash}_1 \combi PDel^{canrecover}_1$
      The complete system is thus described by $PDel^{crash}_1 \combi ((PDel^{crash}_1
      \leadsto PDel^{total}) \cup PDel^{crash}_1)$.
      In the following, we will also introduce an operator $\omega$ to express
      repetition. For example, a system where, repeatedly, a process can crash and recover is
      $(PDel^{crash}_1 \leadsto PDel^{total})^\omega$.

      Lastly, the analysis of the resulting delivered predicate can be bypassed:
      its heard-of predicate arises from our operations applied
      to the heard-of predicates of the elementary building blocks.

    \subsection{Related Work}
    \label{subsec:related}

        The heard-of model was proposed by Charron-Bost and
        Schiper~\cite{CharronBostHO} as a combination of the
        ideas of two previous work. First, the concept of a fault
        model where the only information is which
        message arrives, from Santoro and Widmayer~\cite{SantoroLoss};
        and second, the idea of abstracting failures in a round
        per round fashion, from Gafni~\cite{GafniRRFD}. Replacing
        the operational fault detectors of Gafni with the fault model
        of Santoro and Widmayer gives the heard-of model.

        This model was put to use in many ways. Obviously computability
        and complexity results were proven: new algorithms for consensus
        in the original paper by Charron-Bost and Schiper~\cite{CharronBostHO};
        characterizations for consensus solvability by
        Coulouma et al.~\cite{CouloumaConsensus} and Nowak et
        al.~\cite{NowakTopoConsensus}; a characterization
        for approximate consensus solvability by Charron-Bost et
        al.~\cite{CharronBostApprox}; a study of $k$ set-agreement
        by Biely et al.~\cite{BielyKSet}; and more.

        The clean mathematical abstraction of the heard-of model also
        works well with formal verification. The rounds provide
        structure, and the reasoning can be less operational than
        in many distributed computing abstractions. We thus have
        a proof assistant verification of consensus algorithms
        in Charron-Bost et al.~\cite{CharronBostHOL}; cutoff
        bounds for the model checking of consensus algorithms
        by Mari{\'{c}} et al.~\cite{MaricCutoff};
        a DSL to write code following the structure of the heard-of model
        and verify it with inductive invariants by Dr\u{a}goi et
        al.~\cite{DragoiPsync}; and more.

    \subsection{Contributions}
    \label{subsec:results}

      The contributions of the paper are:

      \begin{itemize}
        \item A definition of operations on delivered predicates
          and strategies, as well as examples using them
          in Section~\ref{sec:ops}.
        \item The study of oblivious strategies, the strategies only looking
          at messages for the current round, in Section~\ref{sec:carefr}.
          We provide a technique to extract a strategy dominating the oblivious
          strategies of the built predicate
          from the strategies of the initial predicates;
          exact computations of the generated heard-of predicates;
          and a sufficient condition on the building blocks for the
          result of operations to be dominated by an oblivious strategy.
        \item The study of conservative strategies, the strategies
          looking at everything but messages from future rounds,
          in Section~\ref{sec:cons}. We provide
          a technique to extract a strategy dominating the conservative
          strategies of the build predicate
          from the strategies of the initial predicates;
          upper bounds on the generated heard-of predicates;
          and a sufficient condition on the building blocks for the
          result of operations to be dominated by a conservative strategy.
      \end{itemize}

      Due to size constraints, many of the complete proofs are not
      in the paper itself, and can be found in the appendix.

\section{Operations and Examples}
\label{sec:ops}

  \subsection{Basic concepts}

    We start by providing basic definitions and intuitions. The model we consider
    proceed by rounds, where processes send messages tagged with a round number,
    wait for some messages with this round number, and then compute the next
    state and increment the round number.
    $\mathbb{N}^*$ denotes the non-zero naturals.

    \begin{definition}[Collections and Predicates]
        Let $\Pi$ a set of processes.
        An element of $(\mathbb{N}^* \times \Pi) \mapsto \mathcal{P}(\Pi)$ is
        either a \textbf{Delivered collection} $c$ or
        a \textbf{Heard-Of collection} $h$ for $\Pi$, depending on the context.
        $c_{tot}$ is the total collection such that $\forall r > 0,
        \forall p \in \Pi: c_{tot}(r,p) = \Pi$.

        An element of
        $\mathcal{P}((\mathbb{N}^* \times \Pi) \mapsto \mathcal{P}(\Pi))$
        is either a \textbf{Delivered predicate} $PDel$
        or a \textbf{Heard-Of predicate} $PHO$
        for $\Pi$. $\mathcal{P}_{tot} = \{c_{tot}\}$
        is the total delivered predicate.
    \end{definition}

    For a heard-of collection $h$, $h(r,p)$ are the senders of messages
    for round $r$ that $p$ has received at or before its round $r$,
    and thus has known while at round $r$.
    For a delivered collection $c$, $c(r,p)$ are the senders of messages
    for round $r$ that $p$ has received, at any point in time.
    Some of these messages may have arrived early, before $p$ was at $r$,
    or too late, after $p$ has left round $r$.
    $c$ gives an operational point of view (which messages arrive),
    and $h$ gives a logical point of view (which messages are used).

    \begin{remark}
        We also regularly use the "graph-sequence" notation
        for a collection~$c$. Let $Graphs_{\Pi}$ be the set of graphs whose nodes are
        the elements of $\Pi$. A collection $gr$ is an element of
        $(Graphs_{\Pi})^{\omega}$.
        We say that $c$ and $gr$ represent the same collection
        when $\forall r > 0, \forall p \in \Pi: c(r,p)
        = In_{gr[r]}(p)$, where $In(p)$ is the incoming vertices of $p$.
        We will usually not define two collections but use one
        collection as both kind of objects; the actual type being used
        in a particular expression can be deduced from the operations
        on the collection. For example $c[r]$ makes sense for a sequence
        of graphs, while $c(r,p)$ makes sense for a function.
    \end{remark}

    In an execution, the local state of a process is the pair of its current
    round and all the received messages up to this point. We disregard
    any local variable, since our focus is on which messages to wait for.
    A message is represented
    by a pair $\langle round, sender\rangle$. For a state $q$, and a round $r > 0$,
    $q(r)$ is the set of peers from which the process has received a message
    for round $r$.

    \begin{definition}[Local State]
        Let $Q = \mathbb{N}^* \times \mathcal{P}(\mathbb{N}^* \times \Pi)$.
        Then $q \in Q$ is a \textbf{local state}.

        For $q = \langle r, mes \rangle$, we write $q.round$ for $r$,
        $q.mes$ for $mes$ and $\forall i > 0: q(i) \triangleq \{k \in \Pi \mid \langle i, k \rangle \in q.mes\}$.
    \end{definition}

    We then define strategies, which constrain the behavior of processes.
    A strategy is a set of states from which a process
    is allowed to change round.
    It captures rules like "wait for at least $F$ messages from the
    current round", or "wait for these specific messages".
    Strategies give a mean to constrain executions. 

    \begin{definition}[Strategy]
        $f \in \mathcal{P}(Q)$ is a \textbf{strategy}.
    \end{definition}

    \subsection{Definition of Operations}
    \label{subsec:defOps}

        We can now define operations on predicates and their corresponding strategies.
        The intuition behind these operations is the following:
        \begin{itemize}
          \item The union of two delivered predicates is equivalent
            to an OR on the two communication behaviors. For example,
            the union of the delivered predicate for one crash at round $r$
            and of the one for one crash at round $r+1$ gives
            a predicate where there is either a crash at round $r$
            or a crash at round $r+1$.
          \item The combination of two behaviors takes
            every pair of collections, one from each predicate, and
            computes the intersection of the graphs at each round. Meaning, it
            adds the loss of messages from both, to get both behaviors
            at once. For example, combining $PDel^{crash}_1$ with itself gives
            $PDel^{crash}_2$, the predicate with at most two crashes.
            Although combination intersects graphs round by round
            in a local fashion, it actually combines two collections globally,
            and thus can combine several global predicates like hearing
            from a given number of process during the whole execution.
          \item For succession, the system
            starts with one behavior, then switch to another. The definition is
            such that the first behavior might never happen, but the second one must
            appear.
          \item Repetition is the next logical step after succession: instead of following
            one behavior with another, the same behavior is repeated again and again.
            For example, taking the repetition of at most one crash results
            in a potential infinite number of crash-and-restart,
            with the constraint of having at most one crashed process at any time.
        \end{itemize}

        \begin{definition}[Operations on predicates]
            Let $P_1, P_2$ be two delivered or heard-of predicates.
            \begin{itemize}
              \item The \textbf{union} of $P_1$ and $P_2$ is $P_1 \cup P_2$.
              \item The \textbf{combination} $P_1 \combi P_2 \triangleq
                \{c_1 \combi c_2 \mid c_1 \in P_1, c_2 \in P_2 \}$,
                where for $c_1$ and $c_2$ two collections,
                $\forall r > 0, \forall p \in \Pi: (c_1 \combi c_2)(r,p) =
                c_1(r,p) \cap c_2(r,p)$.
              \item The \textbf{succession} $P_1 \leadsto P_2 \triangleq
                \bigcup\limits_{c_1 \in P_1, c_2 \in P_2} c_1 \leadsto c_2$,\\
                with $c_1 \leadsto c_2 \triangleq \{ c \mid
                \exists r \geq 0 : c = c_1[1,r].c_2\}$.
              \item The \textbf{repetition} of $P_1$, $(P_1)^{\omega} \triangleq
                \{c \mid \exists (c_i)_{i \in \mathbb{N}^*},
                \exists (r_i)_{i \in \mathbb{N}^*}:
                r_1 = 0 \land
                \forall i \in \mathbb{N}^*:
                (c_i \in P_1 \land r_{i} < r_{i+1} \land
                c[r_i+1,r_{i+1}]=c_i[1,r_{i+1} - r_i]) \}$.
            \end{itemize}
        \end{definition}

        For all operations on predicates, we provide an analogous one for
        strategies. We show later that strategies for the delivered predicates,
        when combined by the analogous operation, retain important
        properties on the result of the operation on the predicates.

        \begin{definition}[Operations on strategies]
            Let $f_1, f_2$ be two strategies.
            \begin{itemize}
              \item Their \textbf{union} $f_1 \cup f_2 \triangleq$ the strategy such that
                $\forall q$ a local state: $(f_1 \cup f_2)(q)
                \triangleq f_1(q) \lor f_2(q)$.
              \item Their \textbf{combination} $f_1 \combi f_2 \triangleq
                \{ q_1 \combi q_2 \mid q_1 \in f_1 \land q_2 \in f_2
                \land q_1.round = q_2.round\}$,
                where for $q_1$ and $q_2$ at the same round $r$,
                $q_1 \combi q_2 \triangleq
                \langle r
                    \{ \langle r', k \rangle
                    \mid r' > 0 \land k \in q_1(r') \cap q_2(r')\}
                \rangle$
              \item Their \textbf{succession} $f_1 \leadsto f_2 \triangleq
                f_1 \cup f_2 \cup \{q_1 \leadsto q_2 \mid
                q_1 \in f_1 \land q_2 \in f_2 \}$
                where $q_1 \leadsto q_2 \triangleq
                \left\langle\begin{array}{l}
                q_1.round+q_2.round, \\
                    \left\{ \langle r, k \rangle
                    \mid r > 0 \land
                    \left(
                    \begin{array}{ll}
                        k \in q_1(r) & \text{if } r \leq q_1.round\\
                        k \in q_2(r-q_1.round) & \text{if } r > q_1.round\\
                    \end{array}
                    \right) \right\}
                \end{array}
                \right\rangle$
              \item The \textbf{repetition} of $f_1$, $f_1^{\omega} \triangleq
                \{q_1 \leadsto q_2 \leadsto ... \leadsto q_k \mid
                k \geq 1 \land q_1,q_2,...,q_k \in f_1$\}.
            \end{itemize}
        \end{definition}

      The goal is to derive new strategies
      for the resulting model by applying operations on strategies for the
      starting models. This allows, in some cases, to bypass
      strategies, and deduce the Heard-Of predicate for a given Delivered
      predicate from the Heard-Of predicates of its building blocks.

    \subsection{Executions and Domination}
    \label{subsec:prelim}

      Before manipulating predicates and strategies, we need
      to define what is an execution: a specific ordering
      of events corresponding to a delivered collection.
      An execution is an infinite sequence of either delivery
      of messages ($deliver(r,p,q)$), change to the next
      round ($next_j$), or a deadlock ($stop$).
      Message sending is implicit
      after every change of round.
      An execution must satisfy three rules:
      no message is delivered before it is sent, no message
      is delivered twice, and once there is a $stop$, the rest of
      the sequence can only be $stop$.

      \begin{definition}[Execution]
          Let $\Pi$ be a set of $n$ processes.
          Let the set of transitions $T = \{ \textit{next}_j \mid j \in \Pi \} \cup
          \{ \textit{deliver}(r,k,j) \mid r \in \mathbb{N}^* \land k,j \in \Pi\}
          \cup \{ stop \}$. \textit{next}$_j$ is the transition for $j$ changing round,
          \textit{deliver}$(r,k,j)$ is the transition for the delivery
          to $j$ of the message sent by $k$ in round $r$, $stop$ models a deadlock.
          Then, $t \in T^{\omega}$ is an \textbf{execution}
          $\triangleq$
          \begin{itemize}
              \item \textbf{(Delivery after sending)}\\
                $\forall i \in \mathbb{N}:
                t[i] = deliver(r,k,j) \implies
                \mathbf{card}(\{l \in [0,i[ \mid t[l] = next_k\}) \geq r-1$
              \item \textbf{(Unique delivery)}\\
                  $\forall \langle r, k, j \rangle \in
                  (\mathbb{N}^* \times \Pi \times \Pi):
                  \mathbf{card}(\{i \in \mathbb{N} \mid t[i] = deliver(r,k,j)\}) \leq 1$
              \item \textbf{(Once stopped, forever stopped)}\\
                $\forall i \in \mathbb{N}:
                t[i] = stop \implies \forall j \ge i : t[j] = stop$
          \end{itemize}

          Let \textit{c} be a delivered collection.
          Then, $execs(c)$, the \textbf{executions of}
          \textit{c} $\triangleq$\\
          $\left\{
              t \textit{ an execution} ~\middle|~
              \begin{array}{l}
                  \forall \langle r, k, j \rangle
                  \in \mathbb{N}^* \times \Pi \times \Pi:\\
                  \quad(
                      k \in c(r,j)
                      \land \mathbf{card}(\{ i \in \mathbb{N} \mid t[i] = next_k\}) \geq r-1
                  )\\
                  \quad\iff\\
                  \quad(
                      \exists i \in \mathbb{N}: t[i] = deliver(r,k,j)
                  )
              \end{array}
              \right\}$

          For a delivered predicate \textit{PDel},
          $execs(PDel) \triangleq \{execs(c) \mid c \in PDel\}$.

          Let $t$ be an execution, $p \in \Pi$ and $i \in \mathbb{N}$.
          The state of $p$ in $t$ after $i$ transitions is $q_p^t[i] \triangleq
          \langle \mathbf{card}(\{ l < i \mid t[l] = next_p\})+1,
          \{\langle r, k \rangle \mid \exists l < i:
          t[l] = deliver(r,k,p)\} \rangle)$
      \end{definition}

      Notice that such executions do not allow process to "jump" from say
      round $5$ to round $9$ without passing by the rounds in-between. The reason
      is that the Heard-Of model does not give processes access to the decision
      to change rounds: processes specify only which messages to send depending on
      the state, and what is the next state depending on the current state and
      the received messages.

      Also, the only information considered here is the round number and the
      received messages. This definition of execution disregards
      the message contents and the internal states of processes, as
      they are irrelevant to the implementation of Heard-Of predicates.

      Recall that strategies constrain when processes can change
      round. Thus, the executions that conform to a strategy change
      rounds only when allowed by it, and do it infinitely often if possible.

      \begin{definition}[Executions of a Strategy]
        \label{def:execsStrat}
          Let $f$ be a strategy and $t$ an execution.
          $t$ is an \textbf{execution of} $f \triangleq$
          $t$ satisfies:
          \begin{itemize}
              \item \textbf{(All nexts allowed)}
                  $\forall i \in \mathbb{N}, \forall p \in \Pi:
                  (t[i] = next_p  \implies q_p^t[i] \in f)$
              \item \textbf{(Fairness)}
                  $\forall p \in \Pi:
                  \mathbf{card}(\{i \in \mathbb{N} \mid t[i] = next_p\}) < \aleph_0
                  \implies \mathbf{card}(\{i \in \mathbb{N} \mid q_p^t[i] \notin f\})
                  = \aleph_0$
          \end{itemize}

          For a delivered predicate \textit{PDel}, $execs_f(\textit{PDel})
          \triangleq
          \{t \in execs(\textit{PDel}) \mid \textit{t is an}$ $\textit{execution of f } \}$.
      \end{definition}

      The fairness property can approximately be expressed in LTL as
      $\forall p \in \Pi: \lozenge \square (q_p^t \in f)\Rightarrow \square \lozenge next_p$.
      Note however that executions are here defined as sequences of transitions,
      whereas LTL models are sequences of states.

      An important part of this definition  considers
      executions where processes cannot necessarily
      change round after each delivery. That is, in the case of
      "waiting for at most $F$ messages", an execution where
      more messages are delivered than $F$ at some round is still an execution
      of the strategy.
      This hypothesis captures the asynchrony of processes,
      which are not always scheduled right after deliveries. It is compensated
      by a weak fairness assumption: if a strategy forever allows
      the change of round, it must eventually happen.

      Going back to strategies, not all of them are equally valuable.
      In general, strategies that block forever at some round are less useful than
      strategies that don't -- they forbid termination in some cases.
      The validity of a strategy captures the absence of such
      an infinite wait.

      \begin{definition}[Validity]
          \newline
          An execution $t$ is \textbf{valid}
          \mbox{$\triangleq
                  \forall p \in \Pi:
                  \mathbf{card}(\{i \in \mathbb{N} \mid t[i] = next_p\}) = \aleph_0$}.

          Let \textit{PDel} a delivered predicate and $f$ a strategy.
          $f$ is a \textbf{valid strategy} for \textit{PDel}
          $\triangleq \forall t \in execs_f(PDel): t$ is a valid execution.
      \end{definition}

      Because in a valid execution no process is ever blocked at a given
      round, there are infinitely many rounds. Hence, the messages delivered
      before the changes of round uniquely define a heard-of collection.

      \begin{definition}[Heard-Of Collection of Executions and
                         Heard-Of Predicate of Strategies]
          Let $t$ be a valid execution.
          $h_t$ is the \textbf{heard-of collection of} $t \triangleq\\
          \forall r \in \mathbb{N}^*, \forall p \in \Pi : h_t(r,p) =
          \left\{ k \in \Pi ~\middle|~ \exists i \in \mathbb{N}:
          \left(
          \begin{array}{ll}
              & q_p^t[i].round = r\\
              \land & t[i] = next_p\\
              \land & \langle r,k \rangle \in q_p^t[i].mes\\
          \end{array}
          \right)
          \right\}$

          Let $PDel$ be a delivered predicate, and $f$ be a valid strategy for $PDel$.
          We write $PHO_f(\textit{PDel})$ for the heard-of predicate
          composed of the collections of the executions of $f$
          on \textit{PDel}: $PHO_f(\textit{PDel})
          \triangleq \{ h_t \mid t \in execs_f(\textit{PDel}) \}$.
      \end{definition}

      Lastly, the heard-of predicate of most interest is the strongest one
      that can be generated by a valid strategy on the delivered predicate. Here
      strongest means the one that implies all the other heard-of predicates that
      can be generated on the same delivered predicate.
      The intuition boils down to two ideas:
      \begin{itemize}
        \item The strongest predicate implies all heard-of predicates
          generated on the same $PDel$, and thus it characterizes them
          completely.
        \item When seeing predicates as sets, implication is the reverse inclusion.
          Hence the strongest predicate is the one included in all the others.
          Less collections means more constrained communication, which
          means a more powerful model.
      \end{itemize}
      This notion of strongest predicate is formalized through an order on
      strategies and their heard-of predicates.

      \begin{definition}[Domination]
          Let \textit{PDel} be a delivered predicate and let
          $f$ and $f'$ be two valid strategies for \textit{PDel}.
          $f$~\textbf{dominates} $f'$ for \textit{PDel},
          written $f' \prec_{\textit{PDel}} f$, $\triangleq
          PHO_{f'}(\textit{PDel}) \supseteq PHO_f(\textit{PDel})$.

          A greatest element for $\prec_{\textit{PDel}}$ is called
          a \textbf{dominating strategy} for \textit{PDel}. Given
          such a strategy $f$, the \textbf{dominating predicate}
          for \textit{PDel} is $PHO_f(\textit{PDel})$.
      \end{definition}

    \subsection{Examples}
    \label{subsec:examples}

        We now show the variety of models that can be constructed
        from basic building blocks. Our basic blocks are the model $PDel^{total}$
        with only the collection $c_{total}$ where all the messages are delivered,
        and the model $PDel^{crash}_{1,r}$ with at most one crash that can happen at round $r$.

        \begin{definition}[At most $1$ crash at round $r$]
            \label{crash-round}
            $\mathcal{P}^{crash}_{1,r} \triangleq\\ \left\{
            c \text{ a delivered collection}
            \middle| \exists \Sigma \subseteq \Pi:
            \begin{array}{ll}
                & |\Sigma| \geq n-1\\
                \land & \forall j \in \Pi
                    \left(
                    \begin{array}{lll}
                        & \forall r' \in [1,r[:
                            & c(r',j) = \Pi\\
                        \land & & c(r,j) \supseteq \Sigma\\
                        \land & \forall r' \geq r:
                            & c(r',j) = \Sigma\\
                    \end{array}
                    \right)\\
            \end{array}
            \right\}$.
        \end{definition}

        From this family of predicates, various predicates can be built.
        Table~\ref{tab:examples} show some of them, as well as the Heard-Of
        predicates computed for these predicates based on the results from
        Section~\ref{subsec:HOCarefr} and Section~\ref{subsec:domCarefr}.
        For example the predicate with at most one crash $\mathcal{P}^{crash}_{1}$
        If a crash happens, it happens at one specific round $r$. We can
        thus build $\mathcal{P}^{crash}_{1}$ from a disjunction
        for all values of $r$ of the predicate with at most one crash at round $r$;
        that is, by the union of $\mathcal{P}^{crash}_{1,r}$ for all $r$.

        \begin{table}[t]
          \scriptsize
        \begin{tabular}{|l|l|l|c|}
            \hline
            Description & Expression & HO & Proof\\
            \hline
            At most 1 crash &
                $\mathcal{P}^{crash}_{1} =
                \bigcup\limits_{i=1}^{\infty} \mathcal{P}^{crash}_{1,i}$
                & $\textit{HOProd}(\{T \subseteq \Pi \mid |T|\geq n-1\})$
                & \cite{ShimiOPODIS18}\\
            \hline
            At most $F$ crashes &
                $\mathcal{P}^{crash}_{F} =
                \combi\limits_{j=1}^F \mathcal{P}^{crash}_1$
                & $\textit{HOProd}(\{T \subseteq \Pi \mid |T|\geq n-F\})$
                & \cite{ShimiOPODIS18}\\
            \hline
            \begin{tabular}{l}
                At most 1 crash,\\
                which will restart
            \end{tabular}&
                $\mathcal{P}^{recover}_{1} =
                \mathcal{P}^{crash}_1 \leadsto \mathcal{P}^{total}$
                & $\textit{HOProd}(\{T \subseteq \Pi \mid |T|\geq n-1\})$
                & Thm~\ref{oblivOpsHO}\\
            \hline
            \begin{tabular}{l}
                At most $F$ crashes,\\
                which will restart
            \end{tabular}&
                $\mathcal{P}^{recover}_{F} =
                \combi\limits_{j=1}^F \mathcal{P}^{recover}_{1}$
                & $\textit{HOProd}(\{T \subseteq \Pi \mid |T|\geq n-F\})$
                & Thm~\ref{oblivOpsHO}\\
            \hline
            \begin{tabular}{l}
                At most $1$ crash,\\
                which can restart
            \end{tabular}&
                $\begin{array}{l}
                   \mathcal{P}^{canrecover}_{1}\\
                   = \mathcal{P}^{recover}_{1} \cup \mathcal{P}^{crash}_1
                \end{array}$
                & $\textit{HOProd}(\{T \subseteq \Pi \mid |T|\geq n-1\})$
                & Thm~\ref{oblivOpsHO}\\
            \hline
            \begin{tabular}{l}
                At most $F$ crashes,\\
                which can restart
            \end{tabular}&
                $\begin{array}{l}
                \mathcal{P}^{canrecover}_{F}\\
                = \combi\limits_{j=1}^F \mathcal{P}^{canrecover}_{1}\\
                \end{array}$
                & $\textit{HOProd}(\{T \subseteq \Pi \mid |T|\geq n-F\})$
                & Thm~\ref{oblivOpsHO}\\
            \hline
            \begin{tabular}{l}
              No bound on crashes\\
              and restart, with only\\
              $1$ crash at a time
            \end{tabular}&
                $\mathcal{P}^{recovery}_1 =
                (\mathcal{P}^{crash}_1)^{\omega}$
                & $\textit{HOProd}(\{T \subseteq \Pi \mid |T|\geq n-1\})$
                & Thm~\ref{oblivOpsHO}\\
            \hline
            \begin{tabular}{l}
              No bound on crashes\\
              and restart, with max\\
              $F$ crashes at a time
            \end{tabular}&
                $\mathcal{P}^{recovery}_F =
                \combi\limits_{j=1}^F \mathcal{P}^{recovery}_1$
                & $\textit{HOProd}(\{T \subseteq \Pi \mid |T|\geq n-F\})$
                & Thm~\ref{oblivOpsHO}\\
            \hline
            \begin{tabular}{l}
                At most $1$ crash,\\
                after round $r$\\
            \end{tabular}&
                $\mathcal{P}^{crash}_{1,\geq r} =
                    \bigcup\limits_{i=r}^{\infty} \mathcal{P}^{crash}_{1,i}$
                & \begin{tabular}{l}
                    $\subseteq
                        \textit{HOProd}(\{T \subseteq \Pi \mid |T|\geq n-1\})$\\
                \end{tabular}
                & \begin{tabular}{l}
                    Thm~\ref{upperBoundHO}\\
                \end{tabular}\\
            \hline
            \begin{tabular}{l}
                At most $F$ crashes,\\
                after round $r$\\
            \end{tabular}&
                $\mathcal{P}^{crash}_{F,\geq r} =
                    \bigcup\limits_{i=r}^{\infty} \mathcal{P}^{crash}_{F,i}$
                & \begin{tabular}{l}
                    $\subseteq
                        \textit{HOProd}(\{T \subseteq \Pi \mid |T|\geq n-F\})$\\
                \end{tabular}
                & \begin{tabular}{l}
                    Thm~\ref{upperBoundHO}\\
                \end{tabular}\\
            \hline
            \begin{tabular}{l}
                At most $F$ crashes\\
                with no more than\\
                one per round\\
            \end{tabular}&
                $\begin{array}{l}
                   \mathcal{P}^{crash\neq}_{F}\\
                   = \bigcup\limits_{i_1 \neq i_2 \ldots\neq  i_F}
                   \combi\limits_{j=1}^{F} \mathcal{P}^{crash}_{1,i_j}\\
                \end{array}$
                & \begin{tabular}{l}
                    $\subseteq
                        \textit{HOProd}(\{T \subseteq \Pi \mid |T|\geq n-F\})$\\
                \end{tabular}
                & \begin{tabular}{l}
                    Thm~\ref{upperBoundHO}\\
                \end{tabular}\\
            \hline
        \end{tabular}
        \caption{A list of delivered predicate built using our operations,
        and their corresponding heard-of predicate. The $\textit{HOProduct}$ operator
        is defined in Definition~\ref{HOProd}.}
        \label{tab:examples}
        \end{table}

    \subsection{Families of strategies}
    \label{subsec:typesStrat}

    Strategies as defined above are predicates on states. This makes them incredibly
    expressive; on the other hand, this expressivity creates difficulty in reasoning
    about them.
    To address this problem, we define families of strategies. Intuitively,
    strategies in a same family depend on a specific part
    of the state -- for example the messages of the current round.
    Equality of these parts of the state defines an equivalence relation;
    the strategies of a family are strategies on the equivalence classes of
    this relation.

    \begin{definition}[Families of strategies]
      Let $\approx : Q \times Q \to bool$. The family of strategies defined
      by $\approx$, $family(\approx) \triangleq \{ f \text{ a strategy} \mid \forall
      q_1,q_2 \in \Pi: q_1 \approx q_2 \implies  (q_1 \in f \iff q_2 \in f)\}$
    \end{definition}

\section{Oblivious Strategies}
\label{sec:carefr}

    The simplest non-trivial strategies use only information from the messages of the
    current round. These strategies that do not remember messages from
    previous rounds, do not use messages in advance from future rounds, and
    do not use the round number itself.
    These strategies are called oblivious. They are simple, the Heard-Of predicates
    they implement are relatively easy to compute, and they require little
    computing power and memory to implement.
    Moreover, many examples above are dominated by such a strategy.
    Of course, there is a price to pay: oblivious strategies
    tend to be coarser than general ones.

    \subsection{Minimal Oblivious Strategy}
    \label{subsec:remCarefr}

        An oblivious strategy is defined by the different subsets of $\Pi$ from
        which it has to receive a message before allowing a change of round.

        \begin{definition}[Oblivious Strategy]
            Let $obliv$ be the function such that $\forall q \in Q: obliv(q) =
            \{k \in \Pi \mid \langle q.round, k \rangle \in q.mes\}$.
            Let $\approx_{obliv}$ the equivalence relation defined by
            $q_1 \approx_{obliv} q_2 \triangleq obliv(q_1) = obliv(q_2)$.
            The family of oblivious strategies is $family(\approx_{obliv})$.
            For $f$ an oblivious strategy, let $\textit{Nexts}_{f} \triangleq
            \{obliv(q) \mid q \in f \}$.
            It uniquely defines $f$.
        \end{definition}

        We will focus on a specific strategy, that dominates the oblivious
        strategies for a predicate. This follows from the fact that it waits
        less than any other valid oblivious strategy for this predicate.

        \begin{definition}[Minimal Oblivious Strategy]
            \label{minCarefr}
            Let $PDel$ be a delivered predicate. The \textbf{minimal oblivious strategy}
            for $PDel$ is $f_{min} \triangleq\\
            \{q \mid \exists c \in PDel, \exists p \in \Pi,
            \exists r  > 0: obliv(q) = c(r,p) \}$.
        \end{definition}

        \begin{lemma}[Domination of Minimal Oblivious Strategy]
            \label{domMinObliv}
            Let $PDel$ be a PDel and $f_{min}$ be its minimal oblivious strategy.
            Then $f_{min}$ is a dominating oblivious strategy for $PDel$.
        \end{lemma}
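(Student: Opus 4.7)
The plan is to verify the three conditions bundled in the statement: $f_{min}$ is oblivious, $f_{min}$ is a valid strategy for $PDel$, and $f_{min}$ dominates every other valid oblivious strategy for $PDel$ (since ``dominating'' here means a greatest element of $\prec_{PDel}$ restricted to oblivious strategies). The first condition is immediate from Definition~\ref{minCarefr}: membership of $q$ in $f_{min}$ depends only on the value $obliv(q)$.

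For validity, I fix $t \in execs_{f_{min}}(c)$ with $c \in PDel$ and argue by induction on $r$ that each $p$ reaches round $r+1$. If $p$ reaches round $r$, the delivery rules of $execs(c)$ ensure that all messages in $c(r,p)$ are eventually delivered while $p$ remains at $r$. At that point $p$'s state $q$ satisfies $obliv(q) = c(r,p) \in \textit{Nexts}_{f_{min}}$ by construction, so $q \in f_{min}$ and stays so as long as $p$ is at round $r$. The fairness clause of Definition~\ref{def:execsStrat} then forces a $next_p$ transition, so $p$ reaches round $r+1$.

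For domination, the crux is the inclusion $\textit{Nexts}_{f_{min}} \subseteq \textit{Nexts}_{f'}$ for every valid oblivious strategy $f'$ on $PDel$. Pick $S = c(r,p) \in \textit{Nexts}_{f_{min}}$ and suppose for contradiction that $S \notin \textit{Nexts}_{f'}$. I construct a ``lazy'' execution $t$ on $c$ by (i)~taking a prefix of some execution of $f'$ on $c$ that drives $p$ to round $r$ (such a prefix exists because $f'$ is valid, hence $p$ eventually reaches round $r$ in every execution of $f'$ on $c$), (ii)~delivering the remaining messages of $c(r,p)$ to $p$ without performing any further $next_p$, and (iii)~continuing the other processes however they would go under $f'$. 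Every $next_p$ in $t$ is taken from the prefix and is therefore allowed by $f'$; moreover, $p$'s state eventually stabilises at the $q$ with $obliv(q) = S \notin \textit{Nexts}_{f'}$, so $q \notin f'$ and the fairness clause of Definition~\ref{def:execsStrat} is satisfied vacuously for $p$. Thus $t \in execs_{f'}(c)$, yet $p$ changes round only finitely many times, so $t$ is not valid, contradicting $f'$ being a valid strategy for $PDel$.

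Given the inclusion, the domination inequality itself is straightforward. For $h = h_t \in PHO_{f_{min}}(PDel)$ with $t \in execs_{f_{min}}(c)$, each $h(r,p)$ equals $obliv(q)$ for the state at which $p$ leaves round $r$, and so lies in $\textit{Nexts}_{f_{min}} \subseteq \textit{Nexts}_{f'}$. I build $t' \in execs_{f'}(c)$ by scheduling, for every $(r,p)$, the messages of $h(r,p)$ to reach $p$ first and $p$ to perform $next_p$ at that moment (allowed since the resulting state is in $f'$), then delivering the rest of $c(r,p) \setminus h(r,p)$ after $p$ has left round $r$; this gives $h_{t'} = h$, hence $h \in PHO_{f'}(PDel)$ and $f' \prec_{PDel} f_{min}$. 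The main obstacle is the lazy-execution construction above: one must be careful that the prefix truly places $p$ at round $r$ with $c(r,p)$ deliverable, and that the final state of $p$ can be made to sit permanently outside $f'$ while other processes keep transitioning (any other process also getting stuck only strengthens the non-validity of $t$).
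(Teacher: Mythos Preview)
Your argument is correct and follows the same three-step structure as the paper: validity of $f_{min}$, then the containment $f_{min}\subseteq f'$ for every valid oblivious $f'$, then the heard-of inclusion. The paper packages the first two steps into Lemma~\ref{carefrValid} (a necessary-and-sufficient condition for validity of an oblivious strategy), which you essentially re-derive inline with your induction and your lazy-execution construction.

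The one place you take an unnecessary detour is the last step: you build a fresh execution $t'$ of $f'$ realising a given $h$, whereas the paper simply observes that the original $t\in execs_{f_{min}}(PDel)$ is itself already an execution of $f'$. Indeed, once $f_{min}\subseteq f'$, the ``all nexts allowed'' clause for $f'$ follows immediately from that clause for $f_{min}$; and since $f_{min}$ is valid, every such $t$ has infinitely many $next_p$ for each $p$, so the fairness clause for $f'$ holds vacuously. This yields $execs_{f_{min}}(PDel)\subseteq execs_{f'}(PDel)$ and hence $PHO_{f_{min}}(PDel)\subseteq PHO_{f'}(PDel)$ directly, with no rescheduling needed.
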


        \begin{proof}[Proof idea]
            $f_{min}$ is valid, because for every possible set of received
            messages in a collection of $PDel$, it accepts the corresponding
            oblivious state by definition of minimal oblivious strategy.
            It is dominating among oblivious strategies because any other
            valid oblivious strategy must allow the change of round when
            $f_{min}$ does it: it contains $f_{min}$.
            If an oblivious strategy does not contain $f_{min}$,
            then there is a collection of $PDel$ in which at a given round,
            a certain process might receive exactly the messages for the
            oblivious state accepted by $f_{min}$ and not by $f$. This
            entails that $f$ is not valid.
        \end{proof}

    \subsection{Operations Maintain Minimal Oblivious Strategy}
    \label{subsec:domOpsCarefr}

        As teased above, minimal oblivious strategies behave nicely under
        the proposed operations. That is, they give minimal oblivious strategies
        of resulting delivered predicates.
        One specificity of minimal oblivious strategies is that
        there is no need for the succession
        operation on strategies, nor for the repetition.
        An oblivious strategy has no
        knowledge about anything but the messages of the current round, and not
        even its round number, so it is impossible to distinguish a union from
        a succession, or a repetition from the initial predicate itself.

        \begin{theorem}[Minimal Oblivious Strategy for Union and Succession]
            \label{unionCarefr}
            Let $PDel_1, PDel_2$ be two delivered predicates, $f_1$ and $f_2$ the minimal
            oblivious strategies for, respectively, $PDel_1$ and $PDel_2$.
            Then $f_1 \cup f_2$ is the minimal oblivious strategy for
            $PDel_1 \cup PDel_2$ and $PDel_1 \leadsto PDel_2$.
        \end{theorem}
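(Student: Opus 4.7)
The approach is to reduce the statement to a set-theoretic identity about the ``allowed received sets''. Define $\mathcal{S}(PDel) \triangleq \{c(r,p) \mid c \in PDel,\, r > 0,\, p \in \Pi\}$. By Definition~\ref{minCarefr}, $q \in f_{min}(PDel)$ iff $obliv(q) \in \mathcal{S}(PDel)$, and from the operational definition of union on strategies, $q \in f_1 \cup f_2$ iff $q \in f_1$ or $q \in f_2$. The theorem therefore reduces to showing
\[
\mathcal{S}(PDel_1 \cup PDel_2) \;=\; \mathcal{S}(PDel_1 \leadsto PDel_2) \;=\; \mathcal{S}(PDel_1) \cup \mathcal{S}(PDel_2).
\]

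The union case is immediate: unfolding the existential in $\mathcal{S}(PDel_1 \cup PDel_2)$ and splitting on whether the witness collection $c$ lies in $PDel_1$ or $PDel_2$ yields the right-hand side directly.

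For the succession case I would prove both inclusions. For the ``$\subseteq$'' direction, take any $c \in PDel_1 \leadsto PDel_2$; by definition $c = c_1[1,r_0].c_2$ for some $r_0 \geq 0$, $c_1 \in PDel_1$, $c_2 \in PDel_2$. At each position $(r,p)$ with $r \leq r_0$ we have $c(r,p) = c_1(r,p) \in \mathcal{S}(PDel_1)$, and at each $(r,p)$ with $r > r_0$ we have $c(r,p) = c_2(r - r_0,\, p) \in \mathcal{S}(PDel_2)$. For the ``$\supseteq$'' direction, assume both predicates are non-empty and fix witnesses $c_1^\star \in PDel_1$, $c_2^\star \in PDel_2$. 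Any $c_1(r,p) \in \mathcal{S}(PDel_1)$ is realized at position $(r,p)$ by $c_1[1,r].c_2^\star \in PDel_1 \leadsto PDel_2$ (choosing $r_0 = r$), and any $c_2(r,p) \in \mathcal{S}(PDel_2)$ is realized at position $(r,p)$ by $c_1^\star[1,0].c_2 = c_2 \in PDel_1 \leadsto PDel_2$ (choosing $r_0 = 0$).

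The main obstacle is the indexing bookkeeping in the succession case: one must align the cut point $r_0$ with the target round $r$ in each realization, and crucially exploit that $r_0 = 0$ is allowed so that every $c_2$-value appears as some $c$-value without any $c_1$-prefix. A minor caveat worth flagging is the degenerate case where $PDel_1$ or $PDel_2$ is empty: then $PDel_1 \leadsto PDel_2 = \emptyset$ while $f_1 \cup f_2$ may not be, so the succession half of the theorem tacitly assumes both predicates are non-empty; the union half handles emptiness without issue.
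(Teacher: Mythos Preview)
Your proposal is correct and takes essentially the same approach as the paper: both reduce to showing that the set of ``oblivious states'' (your $\mathcal{S}(PDel)$, the paper's $\{obliv(q) \mid \ldots\}$) coincides for the three predicates, via the same case split on whether a position in a succession falls before or after the cut $r_0$, and the same realization of $PDel_2$-values via the $r_0=0$ case. Your flagging of the emptiness caveat is apt---the paper's appendix proof also silently picks $c_2 \in PDel_2$ (and implicitly $c_1 \in PDel_1$), so it shares the same tacit non-emptiness assumption.
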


        \begin{proof}[Proof idea]
            Structurally, all proofs in this section consist in showing
            equality between the strategies resulting from the operations
            and the minimal oblivious strategy for the delivered predicate.

            For a union, the messages that can be received at each round are
            the messages that can be received at each round in the first
            predicate or in the second. This is also true for succession.
            Given that $f_1$ and $f_2$ are the minimal oblivious strategies
            of $PDel_1$ and $PDel_2$,
            they accept exactly the states with one of these sets of current
            messages. And thus $f_1 \cup f_2$ is the minimal oblivious strategy for
            $PDel_1 \cup PDel_2$ and $PDel_1 \leadsto PDel_2$.
        \end{proof}

        \begin{theorem}[Minimal Oblivious Strategy for Repetition]
            \label{repetCarefr}
            Let $PDel$ be a delivered predicate, and
            $f$ be its minimal oblivious strategy.
            Then $f$ is the minimal oblivious strategy for
            $PDel^{\omega}$.
        \end{theorem}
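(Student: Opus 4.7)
The plan is to show that $f$ and the minimal oblivious strategy $f'_{min}$ for $PDel^\omega$ coincide. Since both are oblivious strategies, they are uniquely determined by their \textit{Nexts} sets, so it suffices to prove
\[
\{c(r,p) \mid c \in PDel,\ p \in \Pi,\ r > 0\}
\;=\;
\{c'(r,p) \mid c' \in PDel^\omega,\ p \in \Pi,\ r > 0\}.
\]
This reduces the statement to an identity between two families of subsets of $\Pi$, which I would then prove by two inclusions.

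First I would handle the inclusion $\subseteq$. Given any $c \in PDel$, $p \in \Pi$, and $r > 0$, I would construct a witness $c^\star \in PDel^\omega$ whose first $r$ rounds coincide with those of $c$. Concretely, I pick the sequence $c_i = c$ for every $i$ together with times $r_1 = 0$, $r_2$ any integer greater than $r$, and $r_{i+1} = r_i + 1$ for $i \ge 2$ (or any other strictly increasing extension). The gluing condition $c^\star[r_i{+}1,r_{i+1}] = c_i[1,r_{i+1}-r_i]$ then fully determines $c^\star$, and by construction $c^\star(r,p) = c(r,p)$. Hence every set in the left-hand family occurs on the right.

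For the reverse inclusion $\supseteq$, I would unfold the definition of $PDel^\omega$. Any $c^\star \in PDel^\omega$ comes with sequences $(c_i)_{i \in \mathbb{N}^*}$ in $PDel$ and strictly increasing $(r_i)$ with $r_1 = 0$, such that $c^\star[r_i{+}1, r_{i+1}] = c_i[1, r_{i+1}-r_i]$. Given any $r > 0$ and $p \in \Pi$, choose the unique $i$ with $r_i < r \le r_{i+1}$ (which exists by strict monotonicity and $r_1 = 0$). Then $c^\star(r,p) = c_i(r - r_i,\, p)$ with $c_i \in PDel$ and $r - r_i > 0$, so $c^\star(r,p)$ also lies in the right-hand family.

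The combination of the two inclusions yields $\textit{Nexts}_f = \textit{Nexts}_{f'_{min}}$, and uniqueness of the oblivious strategy determined by its \textit{Nexts} set gives $f = f'_{min}$. The validity of $f$ for $PDel^\omega$ is automatic from Lemma~\ref{domMinObliv} applied to $PDel^\omega$, once we know $f$ has the right \textit{Nexts}. The main subtlety I anticipate is bookkeeping in the $\subseteq$ direction: one must actually exhibit a full infinite collection $c^\star \in PDel^\omega$, not merely a prefix, which means checking that the chosen $(c_i)$ and $(r_i)$ satisfy every clause of the definition (strict monotonicity, the gluing equation on every block, and well-definedness of $c^\star$ past round $r$). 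Everything else is unwinding definitions.
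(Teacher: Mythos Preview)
Your proposal is correct and follows essentially the same approach as the paper's proof: both establish the equality of the two families of oblivious states by mutual inclusion, constructing a witness in $PDel^{\omega}$ from a given $c\in PDel$ for one direction and unfolding the block decomposition of a $c^\star\in PDel^{\omega}$ for the other. (Minor slip: in your $\supseteq$ paragraph you conclude that $c^\star(r,p)$ lies in the ``right-hand family'' when you mean the left-hand one.)
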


        \begin{proof}[Proof idea]
            The intuition is the same as for union and succession.
            Since repetition involves only one PDel, the sets of received
            messages do not change and $f$ is the minimal oblivious strategy.
        \end{proof}

        For combination, a special symmetry hypothesis is needed.

        \begin{definition}[Totally Symmetric PDel]
            Let $PDel$ be a delivered predicate. $PDel$ is \textbf{totally symmetric}
            $\triangleq \forall c \in PDel, \forall r > 0, \forall p \in
            \Pi, \forall r' > 0, \forall q \in \Pi, \exists c' \in PDel:
            c(r,p) = c'(r',q)$
        \end{definition}

        Combination is different because combining collections is done round by round.
        As oblivious strategies do not depend on the
        round, the combination of oblivious strategies creates the same
        combination of received messages for each round. We thus need these
        combinations to be independent of the round -- to be possible at
        each round -- to reconcile those two elements.

        \begin{theorem}[Minimal Oblivious Strategy for Combination]
            \label{combiCarefr}\\
            Let $PDel_1, PDel_2$ be two totally symmetric delivered predicates, $f_1$ and $f_2$ the minimal
            oblivious strategies for, respectively, $PDel_1$ and $PDel_2$.
            Then $f_1 \combi f_2$ is the minimal oblivious strategy for
            $PDel_1 \combi PDel_2$.
        \end{theorem}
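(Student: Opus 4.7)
The plan is to prove $f_1 \combi f_2 = f_{min}$, where $f_{min}$ denotes the minimal oblivious strategy for $PDel_1 \combi PDel_2$. Following the pattern of Theorems~\ref{unionCarefr} and~\ref{repetCarefr}, I would reduce the equality of these oblivious strategies to the equality of their defining $\textit{Nexts}$ sets.

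First I would verify that $f_1 \combi f_2$ is itself oblivious. For a combined state $q_1 \combi q_2$ with $q_1.round = q_2.round = r$, one has $obliv(q_1 \combi q_2) = obliv(q_1) \cap obliv(q_2)$. Given any $q'$ with the same obliv, witnesses $q_1' \in f_1, q_2' \in f_2$ at round $q'.round$ can be exhibited whose combination is $q'$: outside the current round set $q_1'.mes$ and $q_2'.mes$ to agree with $q'.mes$, and at the current round choose $q_i'(q'.round) = obliv(q_i)$, which lies in $f_i$ since $f_i$ is oblivious. Hence $\textit{Nexts}_{f_1 \combi f_2} = \{X_1 \cap X_2 \mid X_1 \in \textit{Nexts}_{f_1}, X_2 \in \textit{Nexts}_{f_2}\}$; by minimality of $f_1$ and $f_2$, this equals $\{c_1(r_1,p_1) \cap c_2(r_2,p_2) \mid c_i \in PDel_i, r_i > 0, p_i \in \Pi\}$. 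Meanwhile $\textit{Nexts}_{f_{min}} = \{c_1(r,p) \cap c_2(r,p) \mid c_i \in PDel_i, r > 0, p \in \Pi\}$ by unpacking the definition of combination on collections.

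The crux is to prove these two sets of intersections coincide. The inclusion of aligned intersections into arbitrary ones is immediate by setting $r_1 = r_2 = r$ and $p_1 = p_2 = p$. For the reverse inclusion, given $c_1(r_1,p_1) \cap c_2(r_2,p_2)$, I would fix any $(r,p)$ and apply total symmetry of each $PDel_i$ to obtain $c_i' \in PDel_i$ with $c_i'(r,p) = c_i(r_i,p_i)$. Then $c_1'(r,p) \cap c_2'(r,p) = (c_1' \combi c_2')(r,p)$ realises the original intersection as an aligned slice of a collection in $PDel_1 \combi PDel_2$.

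The main obstacle is precisely this alignment. Combination of collections is defined pointwise at matching $(r,p)$, whereas oblivious strategies accept obliv-classes with no tie to any particular round or process. Without total symmetry, $\textit{Nexts}_{f_1 \combi f_2}$ can strictly contain $\textit{Nexts}_{f_{min}}$: a message set arising in $PDel_1$ only at some $(r_1,p_1)$ and one arising in $PDel_2$ only at a distinct $(r_2,p_2)$ would yield a legitimate intersection for $f_1 \combi f_2$ while no single $(r,p)$ witnesses it in $PDel_1 \combi PDel_2$. Total symmetry is exactly the hypothesis that dissolves this mismatch; once granted, the rest of the argument is set-theoretic bookkeeping on $\textit{Nexts}$.
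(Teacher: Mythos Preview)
Your proposal is correct and follows essentially the same line as the paper: both arguments reduce the claim to showing that the set of \emph{arbitrary} intersections $\{c_1(r_1,p_1)\cap c_2(r_2,p_2)\}$ coincides with the set of \emph{aligned} intersections $\{c_1(r,p)\cap c_2(r,p)\}$, and both close that gap by invoking total symmetry to relocate a slice $c_i(r_i,p_i)$ to a common coordinate $(r,p)$. The only organisational differences are that you first check explicitly that $f_1\combi f_2$ is oblivious and phrase everything through $\textit{Nexts}$ sets (the paper works directly with states and gets obliviousness for free from the set equality with Definition~\ref{minCarefr}), and that you invoke total symmetry of both predicates whereas the paper's proof uses it only for $PDel_2$; neither affects correctness.
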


        \begin{proof}[Proof idea]
            The oblivious states of $PDel_1 \combi PDel_2$ are the combination
            of an oblivious state of $PDel_1$ and of one of $PDel_2$ at the
            same round, for the same process. Thanks to total symmetry,
            this translates into the intersection of any oblivious state of
            $PDel_1$ with any oblivious state of $PDel_2$.
            Since $f_1$ and $f_2$ are the minimal oblivious strategy, they both
            accept exactly the oblivious states of $PDel_1$ and $PDel_2$
            respectively. Thus, $f_1 \combi f_2$ accept all combinations
            of oblivious states of $PDel_1$ and $PDel_2$, and thus
            is the minimal oblivious strategy of $PDel_1 \combi PDel_2$.
        \end{proof}

    \subsection{Computing Heard-Of Predicates}
    \label{subsec:HOCarefr}

        The computation of the heard-of predicate generated
        by an oblivious strategy is easy thanks to a characteristic of this HO:
        it is a product of sets of possible messages.

        \begin{definition}[Heard-Of Product]
            \label{HOProd}
            Let $S \subseteq \mathcal{P}(\Pi)$.
            The \textbf{heard-of product generated by S},
            $HOProd(S) \triangleq \{h \mid \forall p \in \Pi,
            \forall r > 0: h(r,p) \in S \}$.
        \end{definition}

        \begin{lemma}[Heard-Of Predicate of an Oblivious Strategy]
            \label{carefrHO}
            Let $PDel$ be a delivered predicate containing $c_{tot}$ and
            let $f$ be a valid oblivious strategy for $PDel$. Then
            $\textit{PHO}_f(PDel) = \textit{HOProd}(\textit{Nexts}_f)$.
        \end{lemma}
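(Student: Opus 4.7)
The plan is to prove the two set inclusions separately, with the second one carrying the real work.

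For $PHO_f(PDel) \subseteq HOProd(Nexts_f)$, I would fix any $h_t \in PHO_f(PDel)$ and check that $h_t(r,p) \in Nexts_f$ for every $r > 0$ and $p \in \Pi$. Because $f$ is a valid strategy, $t$ is a valid execution, so $p$ eventually performs a $next_p$ transition at round $r$: pick $i$ with $t[i] = next_p$ and $q_p^t[i].round = r$. By the \textbf{All nexts allowed} clause we get $q_p^t[i] \in f$, hence $obliv(q_p^t[i]) \in Nexts_f$. The definition of $h_t$ gives $obliv(q_p^t[i]) = h_t(r,p)$, which closes this direction.

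For $HOProd(Nexts_f) \subseteq PHO_f(PDel)$, the plan is to exhibit, for every $h \in HOProd(Nexts_f)$, an execution $t \in execs_f(c_{tot})$ with $h_t = h$; since $c_{tot} \in PDel$ this gives $h \in PHO_f(PDel)$. I schedule $t$ round by round: for each $r = 1, 2, \dots$ and each $p \in \Pi$ taken in a fixed order, first execute $deliver(r,k,p)$ for every $k \in h(r,p)$ and then $next_p$; once round $r$ has been processed for every process, flush the remaining $deliver(r,k,p)$ for $k \in \Pi \setminus h(r,p)$ before starting round $r+1$.

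The main obstacle is verifying that this schedule satisfies every axiom of $execs_f(c_{tot})$ simultaneously. \textbf{Delivery after sending} holds because, before round $r$ is processed, every sender has already performed its $r-1$ earlier $next$ transitions; \textbf{Unique delivery} and membership in $execs(c_{tot})$ are immediate from the fact that each triple $\langle r,k,p\rangle$ is scheduled exactly once. For \textbf{All nexts allowed}, at the $next_p$ transition for round $r$ the round-$r$ messages received by $p$ are exactly $h(r,p)$, so the $obliv$ value of that state equals $h(r,p) \in Nexts_f$; since $Nexts_f = \{obliv(q) \mid q \in f\}$ and $f \in family(\approx_{obliv})$, any state with that $obliv$ value lies in $f$. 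Fairness holds trivially because every $next_p$ eventually fires. Finally, inspecting the state at each $next_p$ gives $h_t(r,p) = h(r,p)$, so $h_t = h$ and the second inclusion is established.
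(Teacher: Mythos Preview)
Your argument is correct. Both inclusions are handled cleanly: the first uses validity together with the \textbf{All nexts allowed} clause exactly as needed, and for the second your round-by-round schedule on $c_{tot}$ indeed produces an execution in $execs_f(c_{tot})$ with the prescribed heard-of collection. The only point worth tightening is the phrasing ``Fairness holds trivially because every $next_p$ eventually fires'': what you mean (and use) is that each process performs \emph{infinitely many} $next$ transitions, so the antecedent of the fairness implication is false.

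As for the comparison: the paper does not actually prove this lemma in-text; it simply cites \cite[Theorem~20]{ShimiOPODIS18}. Your self-contained double-inclusion argument, with an explicit execution built over $c_{tot}$, is the natural proof and almost certainly mirrors what the cited reference does, but there is no in-paper proof to compare against.
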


        \begin{proof}
            Proved in ~\cite[Theorem 20, Section 4.1]{ShimiOPODIS18}.
        \end{proof}

        Thanks to this characterization, the heard-of predicate
        generated by the minimal strategies for the operations is computed in terms
        of the heard-of predicate generated by the original minimal strategies.

        \begin{theorem}[Heard-Of Predicate of Minimal Oblivious Strategies]
            \label{oblivOpsHO}
            Let $PDel, PDel_1, PDel_2$ be delivered predicates containing $c_{tot}$.
            Let $f, f_1, f_2$ be their respective minimal oblivious strategies.
            Then:
            \begin{itemize}
                \item $PHO_{f_1 \cup f_2}(PDel_1 \cup PDel_2) =
                    PHO_{f_1 \cup f_2}(PDel_1 \leadsto PDel_2)\\
                    = \textit{HOProd}(Nexts_{f_1} \cup Nexts_{f_2})$.
                \item If $PDel_1$ or $PDel_2$ are totally symmetric,
                    $PHO_{f_1 \combi f_2}(PDel_1 \combi PDel_2)=
                    \textit{HOProd}(\{ n_1 \cap n_2 \mid
                    n_1 \in Nexts_{f_1} \land n_2 \in Nexts_{f_2}\})$.
                \item $PHO_f(PDel^{\omega}) = PHO_f(PDel)$.
            \end{itemize}
        \end{theorem}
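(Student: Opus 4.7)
The plan is to assemble the theorem from three ingredients already established: Theorems~\ref{unionCarefr}, \ref{combiCarefr}, and \ref{repetCarefr} identify the operated strategies ($f_1 \cup f_2$, $f_1 \combi f_2$, $f$) as the \emph{minimal} oblivious strategies for the corresponding composed delivered predicates; Lemma~\ref{carefrHO} then turns each $PHO$ into an $HOProd$ over the appropriate $Nexts$ set; and a short combinatorial computation rewrites the $Nexts$ of the composed strategy in terms of $Nexts_{f_1}$, $Nexts_{f_2}$, or $Nexts_f$. A preliminary step that applies to all three cases is to check that $c_{tot}$ lies in the composed predicate so that Lemma~\ref{carefrHO} is applicable: for union this is immediate, for $PDel_1 \leadsto PDel_2$ take $r=0$ in the definition (the prefix of $c_{tot}$ is empty and the suffix is $c_{tot}$), for combination use $c_{tot} \combi c_{tot} = c_{tot}$ since $\Pi \cap \Pi = \Pi$, and for repetition take $c_i = c_{tot}$ with $r_{i+1} = r_i + 1$.

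For the union/succession case, Theorem~\ref{unionCarefr} gives that $f_1 \cup f_2$ is the minimal oblivious strategy for both $PDel_1 \cup PDel_2$ and $PDel_1 \leadsto PDel_2$, so it is in particular valid. Lemma~\ref{carefrHO} applied to either predicate yields $HOProd(Nexts_{f_1 \cup f_2})$, and from the definitions
\[
Nexts_{f_1 \cup f_2} = \{obliv(q) \mid q \in f_1\} \cup \{obliv(q) \mid q \in f_2\} = Nexts_{f_1} \cup Nexts_{f_2},
\]
giving the announced equalities.

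The combination case is the most delicate step. Theorem~\ref{combiCarefr} provides that $f_1 \combi f_2$ is the (valid) minimal oblivious strategy for $PDel_1 \combi PDel_2$ under the symmetry hypothesis, after which Lemma~\ref{carefrHO} reduces the goal to computing $Nexts_{f_1 \combi f_2}$. For the $\subseteq$ direction, any element of $f_1 \combi f_2$ has the form $q_1 \combi q_2$ with $q_1, q_2$ at a common round $r$, and $obliv(q_1 \combi q_2) = q_1(r) \cap q_2(r) = obliv(q_1) \cap obliv(q_2)$ by the combination of states. For the $\supseteq$ direction, given $n_1 \in Nexts_{f_1}$ and $n_2 \in Nexts_{f_2}$, obliviousness means that any local state whose current-round message set is $n_1$ (resp.\ $n_2$) lies in $f_1$ (resp.\ $f_2$), regardless of round and of out-of-round messages; choosing both states at round $1$ produces a valid pair whose combination realises $n_1 \cap n_2$. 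The main obstacle here is precisely this round-matching issue: one must use the fact that oblivious strategies are round-independent to guarantee that the restriction $q_1.round = q_2.round$ in the definition of combined strategies does not cut off any pair $(n_1,n_2)$.

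The repetition case is immediate: Theorem~\ref{repetCarefr} states that $f$ itself is the minimal oblivious strategy for $PDel^{\omega}$, so applying Lemma~\ref{carefrHO} on both $PDel$ and $PDel^{\omega}$ yields the same $HOProd(Nexts_f)$, hence $PHO_f(PDel^{\omega}) = PHO_f(PDel)$.
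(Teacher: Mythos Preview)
Your proposal is correct and follows essentially the same route as the paper: verify that the composed predicates contain $c_{tot}$, apply Lemma~\ref{carefrHO}, and compute the $Nexts$ of the composed strategy in terms of $Nexts_{f_1}$, $Nexts_{f_2}$ (or $Nexts_f$). One small caveat: Theorem~\ref{combiCarefr} as stated assumes \emph{both} predicates are totally symmetric, whereas the present statement only requires one; this is harmless since combination is commutative and the proof of Theorem~\ref{combiCarefr} actually uses total symmetry of only one operand, but you should make that explicit rather than invoking the theorem verbatim.
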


        \begin{proof}[Proof idea]
            We apply Lemma~\ref{carefrHO}.
            The containment of $c_{tot}$ was shown in the proof of
            Theorem~\ref{invarSuffCfree}.
            As for the equality of the oblivious states, it follows from
            the intuition in the proofs of the minimal oblivious strategy in
            the previous section.
        \end{proof}

    \subsection{Domination by an Oblivious Strategy}
    \label{subsec:domCarefr}

      From the previous sections, we can compute the Heard-Of predicate of
      the dominating oblivious strategies for our examples. We first need to
      give the minimal oblivious strategy for our building blocks $PDel^{crash}_1$ and $PDel^{total}$.

      \begin{definition}[Waiting for $n-F$ messages]
        The strategy to wait for $n-F$ messages is:
          $f^{n,F} \triangleq
          \{ q \in Q \mid |obliv(q)| \geq n-F \}$
      \end{definition}

      For all $F < n$, $f^{n,F}$ is the minimal oblivious strategy for
      $PDel^{crash}_F$ (shown by Shimi et al.~\cite[Thm. 17]{ShimiOPODIS18}).
      For $PDel^{total}$, since every process receives all the messages all the time,
      the strategy waits for all the messages ($f^{n,0}$).

      Using these strategies, we deduce the heard-of predicates of
      dominating oblivious strategies for our examples.

      \begin{itemize}\sloppy
        \item For $PDel^{recover}_1 \triangleq PDel^{crash}_1 \leadsto PDel^{total}$,
          the minimal oblivious strategy $f^{recover}_1 = f^{n,1} \cup f^{n,0}
          = f^{n,1}$.
          This entails that\\
          $PHO_{f^{recover}_1} =
          \textit{HOProd}(\{T \subseteq \Pi \mid |T|\geq n-1\})$.
        \item For $PDel^{canrecover}_1 \triangleq PDel^{recover}_1 \cup PDel^{crash}_1$,
          the minimal oblivious strategy $f^{canrecover}_1 = f^{recover}_1 \cup f^{n,1}
          = f^{n,1}$.
          This entails that\\
          $PHO_{f^{canrecover}_1} =
          \textit{HOProd}(\{T \subseteq \Pi \mid |T|\geq n-1\})$.

        \item For $PDel^{crash}_1 \combi PDel^{canrecover}_1$
          the minimal oblivious strategy\\
          $f = f^{n,1} \combi f^{canrecover}_1
          = f^{n,1} \combi f^{n,1} = f^{n,2}$.
          This entails that\\
          $PHO_f = \textit{HOProd}(\{T \subseteq \Pi \mid |T|\geq n-2\})$.
      \end{itemize}

	The computed predicate is the predicate of the dominating \textit{oblivious} strategy.
  But the dominating strategy might not be oblivious,
  and this predicate might be too weak.
	The following result shows that $PDel^{crash}_1$ and $PDel^{total}$
  satisfy conditions that imply their domination by an oblivious strategy.
  Since these conditions are invariant by our operations,
  all PDel constructed with these building blocks are dominated by an oblivious strategy.



      \begin{theorem}[Domination by Oblivious for Operations]
          \label{invarSuffCfree}\\
          Let $PDel, PDel_1, PDel_2$ be delivered predicates that satisfy:
          \begin{itemize}
            \item \textbf{(Total collection)}
              They contains the total collection $c_{tot}$,
            \item \textbf{(Symmetry up to a round)}
              $\forall c$ a collection in the predicate, $\forall p \in \Pi,
              \forall r > 0, \forall r' > 0, \exists c'$ a collection
              in the predicate:
              $c'[1,r'-1] = c_{tot}[1,r'-1] \land \forall q \in \Pi:
              c'(r',q)=c(r,p)$
          \end{itemize}
          Then $PDel_1 \cup PDel_2$, $PDel_1 \combi PDel_2$, $PDel_1 \leadsto
          PDel_2$, $PDel^{\omega}$ satisfy the same two conditions and
          are dominated by oblivious strategies.
      \end{theorem}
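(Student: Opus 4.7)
The plan is two-fold: Part (i), verify that each operation preserves both conditions (which is a routine case-analysis); Part (ii), show that any $PDel$ satisfying both conditions is dominated by its minimal oblivious strategy $f_{min}$ (the main technical obstacle). For Part (i), containment of $c_{tot}$ is immediate: union inherits it from either operand, $c_{tot} \combi c_{tot} = c_{tot}$, succession at index $0$ gives $c_{tot} \in PDel_2 \subseteq PDel_1 \leadsto PDel_2$, and repeating $c_{tot}$ in every segment of the $\omega$ construction yields $c_{tot}$. For symmetry, given a target pattern $c(r,p)$ and target round $r'$, I dispatch by case: union hands off to whichever $PDel_i$ contains $c$; combination applies symmetry to each factor at $(r,p,r')$ and recombines, since intersection commutes with the ``uniform across $q$'' clause; succession splits on whether $r$ lies in the prefix from $PDel_1$ or in the suffix from $PDel_2$ --- in the first case, apply $PDel_1$'s symmetry to obtain $c'_1$ and take succession at index $r'$ with $c_{tot} \in PDel_2$, in the second take a length-$0$ prefix and apply $PDel_2$'s symmetry directly; repetition locates the segment containing round $r$, applies $PDel$'s symmetry to that segment, installs it as the first segment of the repetition (with $r'_1 = 0$), and pads later segments with $c_{tot}$.

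For Part (ii), the central lemma is that, for every $r > 0$ and $T \in \textit{Nexts}_{f_{min}}$, any valid strategy $f$ must accept the local state $\sigma_{r,T}$ defined by ``round $r$, past rounds all $\Pi$, current round messages exactly the senders in $T$, and no future round messages''. Indeed, picking $c \in PDel$ witnessing $T = c(\hat r, \hat p)$ and applying the symmetry condition yields some $c' \in PDel$ with $c'[1,r-1] = c_{tot}[1,r-1]$ and $c'(r,q) = T$ for every $q \in \Pi$. In the execution of $f$ on $c'$ that fully delivers rounds $1$ through $r-1$ and then delivers exactly the $T$-senders' round-$r$ messages, every process simultaneously reaches $\sigma_{r,T}$; if $\sigma_{r,T} \notin f$, no $next$ can fire and no further delivery is possible either (the remaining messages have round $>r$ and their senders have not advanced), producing an invalid execution of $f$ on $c' \in PDel$ and contradicting validity.

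Given this lemma, for any $h \in \textit{HOProd}(\textit{Nexts}_{f_{min}}) = PHO_{f_{min}}(PDel)$ (by Lemma~\ref{carefrHO}), I construct an execution $t'$ of $f$ on $c_{tot} \in PDel$ with $h_{t'} = h$, proceeding in lock-step phases. Phase $r$ begins with all processes at round $r$, past rounds fully delivered, and current/future rounds empty of deliveries. Iterating processes in any order, for each $p$ I deliver exactly the round-$r$ messages from senders in $h(r,p)$, bringing $p$ to state $\sigma_{r,h(r,p)}$ (which $f$ accepts by the lemma), then fire $next_p$; after all processes have advanced, the remaining round-$r$ messages are delivered to each process to restore the ``past-full'' invariant for phase $r+1$. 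The principal obstacle here is maintaining the ``no future'' clause at each critical moment, but since round-$(r+1)$ or higher messages only begin to exist once some process performs its phase-$r$ $next$, and the scheduler is free to defer any delivery, we postpone all such messages until the recipient has itself advanced past the originating round. Validity on $c_{tot}$ holds because every message is eventually delivered in a later phase, $h_{t'} = h$ by construction, and thus $h \in PHO_f(c_{tot}) \subseteq PHO_f(PDel)$, witnessing that $f_{min}$ dominates $f$.
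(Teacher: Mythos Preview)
Your Part (i) matches the paper's approach exactly: show that both the \emph{total collection} and \emph{symmetry up to a round} conditions are preserved by each of the four operations, via the same case split you outline. The paper's case analysis is essentially identical to yours (with slightly different choices of witness collections in the succession and repetition cases, but equivalent).

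Where you diverge is Part (ii). The paper does not prove that the two conditions imply domination by an oblivious strategy; it simply invokes Lemma~\ref{suffDomCarefr}, which in turn cites \cite[Thm~24]{ShimiOPODIS18} as a black box. You instead reprove that lemma from scratch. Your argument is correct: the key observation that any valid $f$ must accept the ``canonical'' state $\sigma_{r,T}$ (past rounds full, current round exactly $T$, no future) is established by building a deadlocking execution on the symmetry witness $c'$, and then the lock-step construction on $c_{tot}$ realises any $h \in \textit{HOProd}(\textit{Nexts}_{f_{min}})$ as the heard-of collection of an execution of $f$. One point you glossed over: to reach round $r$ in the deadlock argument you need the intermediate states $\sigma_{r'',\Pi}$ (for $r'' < r$) to lie in $f$ as well, so the $next$ transitions along the way are allowed. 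This is handled by the same deadlock reasoning applied inductively (or simultaneously, since $\Pi \in \textit{Nexts}_{f_{min}}$ via $c_{tot}$), but it deserves an explicit sentence.

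In short: your proof is correct and strictly more self-contained than the paper's, which outsources the substantive implication to prior work; the tradeoff is that the paper's version is a two-line appeal to a known theorem.
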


      Both $\mathcal{P}^{crash}_1$ from Table~\ref{tab:examples} and
      $\mathcal{P}^{total} = \{c_{tot}\}$ satisfy this condition.
      So do all the first 8 examples from Table~\ref{tab:examples}, since they
      are built from these two.

\section{Conservative Strategies}
\label{sec:cons}

    We now broaden our family of considered strategies, by allowing
    them to consider past and present rounds, as well as the round number
    itself. This is a generalization of oblivious strategies, that
    tradeoff simplicity for expressivity, while retaining a
    nice structure.
    Even better, we show that both our building blocks and all the predicates
    built from them are dominated by such a strategy. For the examples then,
    no expressivity is lost.

    \subsection{Minimal Conservative Strategy}
    \label{subsec:remindersReac}

        \begin{definition}[Conservative Strategy]
            Let $cons$ be the function such that $\forall q \in Q,\ cons(q) \triangleq
            \langle q.round, \{ \langle r, k \rangle \in q.mes \mid r \le q.round\}\rangle$.
            Let $\approx_{cons}$ the equivalence relation defined by
            $q_1 \approx_{cons} q_2 \triangleq cons(q_1) = cons(q_2)$.
            The family of conservative strategies is \mbox{$family(\approx_{cons})$}.
            We write $\textit{Nexts}^R_f \triangleq \{cons(q) \mid q \in f\}$ for the set
            of conservative states in $f$. This uniquely defines $f$.
        \end{definition}

        In analogy with the case of oblivious strategies, we can define
        a minimal conservative strategy of
        $PDel$, and it is a strategy dominating all conservative strategies for
        this delivered predicate.

        \begin{definition}[Minimal Conservative Strategy]
            \label{minReac}
            Let $PDel$ be a delivered predicate.
            The \textbf{minimal conservative strategy}
            for $PDel$ is $f_{min} \triangleq$ the conservative strategy
            such that $f =
            \{q \in Q \mid \exists c \in PDel, \exists p \in \Pi,
            \forall r \leq q.round: q(r) = c(r,p) \}$.
        \end{definition}

%

        \begin{lemma}[Domination of Minimal Conservative Strategy]
            \label{minDom}
            Let $PDel$ be a delivered predicate and $f_{min}$
            be its minimal conservative strategy.
            Then $f_{min}$ dominates the conservative strategies for $PDel$.
        \end{lemma}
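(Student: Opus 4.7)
The plan is to mirror the proof of Lemma~\ref{domMinObliv}, adapting it to the conservative setting. The argument has three components: showing that $f_{min}$ is itself a valid strategy for $PDel$; showing that every valid conservative strategy $f$ for $PDel$ contains $f_{min}$ as a set of states; and deducing that executions of $f_{min}$ on $PDel$ are then also executions of $f$, yielding the required inclusion of heard-of predicates.

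For validity of $f_{min}$, I would proceed by induction on the round. Fix an execution $t$ of $f_{min}$ on some $c \in PDel$ and assume that every process has eventually reached round $r$. The delivery rules for $c$ eventually deliver every message in $c(r',p)$ for all $r' \le r$ to each process $p$, since the senders have by the induction hypothesis performed at least $r-1 \ge r'-1$ next transitions. Once $p$ has received exactly those messages, $c$ and $p$ themselves witness that $p$'s conservative state lies in $f_{min}$ per Definition~\ref{minReac}. Any subsequent deliveries to $p$ are for rounds $> r$ and therefore do not alter its conservative projection, so $p$'s state remains in $f_{min}$ forever thereafter, and fairness of $f_{min}$ forces $p$ to perform $next_p$. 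This completes the inductive step.

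For $f_{min} \subseteq f$, suppose toward a contradiction that there is $q \in f_{min} \setminus f$, witnessed by some $c \in PDel$ and $p \in \Pi$ with $q(r')=c(r',p)$ for every $r' \le q.round$. I would construct an execution of $c$ in which $p$ performs exactly $q.round - 1$ next transitions and then stays frozen, with its received messages for rounds $\le q.round$ matching $q.mes$, while all other processes run according to $f$'s rules. Messages sent by $p$ for rounds $> q.round$ are never delivered, but this is consistent with the iff in the definition of $execs(c)$ since $p$ has not performed the required number of next transitions. The conservative projection of $p$'s state is always $q$, and because $f$ is conservative and $q \notin f$, $p$'s state is never in $f$; hence ``All nexts allowed'' holds vacuously for $p$ and fairness is satisfied with finitely many $next_p$. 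The trace is therefore an execution of $f$ on $PDel$ but is not valid, contradicting the validity of $f$ as a strategy for $PDel$.

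Finally, given $f_{min} \subseteq f$ together with the validity of $f_{min}$, any execution $t$ of $f_{min}$ on $PDel$ is also an execution of $f$: ``All nexts allowed'' transfers directly from the inclusion, and the fairness premise for $f$ is vacuous because each $p$ already performs infinitely many $next_p$ in $t$. Therefore $execs_{f_{min}}(PDel) \subseteq execs_f(PDel)$, which yields $PHO_{f_{min}}(PDel) \subseteq PHO_f(PDel)$, i.e.\ $f \prec_{PDel} f_{min}$. I expect the main obstacle to be the scheduling in the inclusion step: the construction must freeze $p$ at exactly the right state while keeping the rest of the trace a bona fide execution of $c$ and of $f$, which requires careful orchestration of the other processes' progress and of the deliveries they consume.
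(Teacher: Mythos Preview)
Your approach is correct and mirrors the paper's. The paper packages your validity and inclusion steps into a single characterization, Lemma~\ref{consValid} (a conservative strategy is valid for $PDel$ iff it contains $f_{min}$), whose proof is deferred to~\cite{ShimiOPODIS18}; the blocking-execution construction you flag as the ``main obstacle'' is exactly the harder direction of that lemma, and once it is available the remainder of your argument coincides with the paper's verbatim.
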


        \begin{proof}[Proof idea]
            Analogous to the case of minimal oblivious strategies:
            it is valid because it allows to change round for each
            possible conservative state (the round and the messages received
            for this round and before) of collections in $PDel$.
            And since any other valid conservative strategy $f$ must
            accept these states (or it would block forever in some
            execution of a collection of $PDel$), we have that
            $f$ contains $f_{min}$ and thus that $f_{min}$ dominates
            $f$.
        \end{proof}

    \subsection{Operations Maintain Minimal Conservative Strategies}
    \label{subsec:domOpsReac}

        Like oblivious strategies, minimal conservative strategies
        give minimal conservative strategies of resulting delivered predicates.

        \begin{theorem}[Minimal Conservative Strategy for Union]
            \label{unionReac}\\
            Let $PDel_1, PDel_2$ be two delivered predicates, $f_1$ and $f_2$ the minimal
            conservative strategies for, respectively, $PDel_1$ and $PDel_2$.
            Then $f_1 \cup f_2$ is the minimal conservative strategy for
            $PDel_1 \cup PDel_2$.
        \end{theorem}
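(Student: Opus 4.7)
The plan is to prove this by direct set equality, exploiting the fact that the existential quantifier over collections in Definition~\ref{minReac} commutes with the set-theoretic union defining $PDel_1 \cup PDel_2$. Since $f_1 \cup f_2$ is defined by $(f_1 \cup f_2)(q) \triangleq f_1(q) \lor f_2(q)$, that is, as the ordinary union of sets of states, the argument reduces to an unfolding of definitions.

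First I would check that $f_1 \cup f_2$ is indeed a conservative strategy, i.e., lies in $family(\approx_{cons})$. This is immediate from closure of this family under union: if $q_1 \approx_{cons} q_2$ and $q_1 \in f_1 \cup f_2$, then without loss of generality $q_1 \in f_1$, and since $f_1$ is conservative $q_2 \in f_1 \subseteq f_1 \cup f_2$. Next I would write out the definition of the minimal conservative strategy $f_{min}$ for $PDel_1 \cup PDel_2$:
\[
f_{min} = \{ q \in Q \mid \exists c \in PDel_1 \cup PDel_2, \exists p \in \Pi, \forall r \le q.round: q(r) = c(r,p) \}.
\]
Splitting the quantifier over $c \in PDel_1 \cup PDel_2$ into a disjunction over $c \in PDel_1$ or $c \in PDel_2$, and recognizing the two resulting predicates on $q$ as precisely the defining conditions for $f_1$ and $f_2$, yields $f_{min} = f_1 \cup f_2$ by double inclusion.

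There is no real obstacle here; the argument is essentially the distributivity of existential quantification over disjunction. The only point requiring minor care is that the minimality is stated relative to the conservative family: one must verify (as above) that $f_1 \cup f_2$ belongs to $family(\approx_{cons})$ before calling it a minimal conservative strategy, and then apply Lemma~\ref{minDom} to the predicate $PDel_1 \cup PDel_2$ to conclude that this strategy dominates all conservative strategies for the union.
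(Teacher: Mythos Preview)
Your proposal is correct and follows essentially the same route as the paper: both arguments reduce to showing the set equality $f_1 \cup f_2 = \{q \in Q \mid \exists c \in PDel_1 \cup PDel_2,\ \exists p \in \Pi,\ \forall r \le q.round: q(r)=c(r,p)\}$ by splitting the existential over $c \in PDel_1 \cup PDel_2$ into the two cases $c \in PDel_1$ and $c \in PDel_2$, and then invoking the minimality of $f_1$ and $f_2$ respectively. Your explicit verification that $f_1 \cup f_2 \in family(\approx_{cons})$ and your closing remark about Lemma~\ref{minDom} are harmless extras; the paper omits the former because Definition~\ref{minReac} already packages the target set as a conservative strategy, and the latter because the theorem statement only asserts minimality, not domination.
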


        \begin{proof}[Proof idea]
          A prefix of a collection in $PDel_1 \cup PDel_2$ comes from
          either $PDel_1$ or $PDel_2$, and thus is accepted by $f_1$
          or $f_2$. And any state accepted by $f_1 \cup f_2$ corresponds
          to some prefix of $PDel_1$ or $PDel_2$.
        \end{proof}

        For the other three operations, slightly more structure is needed on
        the predicates. More precisely, they have to be independent of
        the processes. Any prefix of a process $p$ in a collection
        of the predicate is also the prefix of any other process $q$
        in a possibly different collection of the same PDel. Hence,
        the behaviors (fault, crashes, loss) are not targeting specific
        processes.
        This restriction fits the intuition
        behind many common fault models.

        \begin{definition}[Symmetric PDel]
            Let $PDel$ be a delivered predicate. $PDel$ is \textbf{symmetric}
            $\triangleq \forall c \in PDel, \forall p \in \Pi, \forall r > 0,
            \forall q \in \Pi, \exists c' \in PDel, \forall r' \leq r:
            c'(r',q) = c(r',p)$
        \end{definition}

        \begin{theorem}[Minimal Conservative Strategy for Combination]
            \label{combiReac}\\
            Let $PDel_1, PDel_2$ be two symmetric delivered predicates,
            $f_1$ and $f_2$ the minimal
            conservative strategies for, respectively, $PDel_1$ and $PDel_2$.
            Then $f_1 \combi f_2$ is the minimal conservative strategy for
            $PDel_1 \combi PDel_2$.
        \end{theorem}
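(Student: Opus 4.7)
The plan is to mirror the structural argument used for Theorem~\ref{unionReac}, but with symmetry of $PDel_1, PDel_2$ playing the role that total symmetry played in Theorem~\ref{combiCarefr}. Since both $f_1 \combi f_2$ and the minimal conservative strategy $f_{min}$ for $PDel_1 \combi PDel_2$ are conservative and are uniquely determined by their conservative states, it suffices to show that $cons(q)$ is accepted by $f_1 \combi f_2$ if and only if it is accepted by $f_{min}$.

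For the direction $f_1 \combi f_2 \subseteq f_{min}$, I would take $q_1 \in f_1$ and $q_2 \in f_2$ with common round $r$ and consider $q_1 \combi q_2$. By Lemma~\ref{minDom} applied to $f_1$ and $f_2$, there exist $c_1 \in PDel_1$, $c_2 \in PDel_2$, and processes $p_1, p_2 \in \Pi$ such that $q_1(r') = c_1(r',p_1)$ and $q_2(r') = c_2(r',p_2)$ for every $r' \leq r$. The two witnesses sit at different processes, which is the key obstacle: a direct combination $c_1 \combi c_2$ at process $p_1$ mixes $c_1(\cdot, p_1)$ with $c_2(\cdot, p_1)$, not with $c_2(\cdot, p_2)$. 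Here symmetry of $PDel_2$ is invoked to produce $c_2' \in PDel_2$ such that $c_2'(r',p_1) = c_2(r',p_2)$ for all $r' \leq r$. Then $c_1 \combi c_2' \in PDel_1 \combi PDel_2$ satisfies $(c_1 \combi c_2')(r',p_1) = q_1(r') \cap q_2(r')$ for every $r' \leq r$, which is exactly the conservative state of $q_1 \combi q_2$, so it is accepted by $f_{min}$.

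For the converse $f_{min} \subseteq f_1 \combi f_2$, I would take a state $q$ with $q.round = r$ accepted by $f_{min}$. Unfolding the definition of $f_{min}$ and of combination on predicates, there exist $c_1 \in PDel_1$, $c_2 \in PDel_2$, and $p \in \Pi$ such that $q(r') = c_1(r',p) \cap c_2(r',p)$ for every $r' \leq r$. Define $q_1$ and $q_2$ to be states at round $r$ with $q_i(r') = c_i(r',p)$ for every $r' > 0$ (any extension of the witness suffices since $cons$ truncates above $r$). Each $q_i$ is accepted by $f_i$ by Lemma~\ref{minDom} with the same process $p$, and by construction $cons(q_1 \combi q_2) = cons(q)$. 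So the conservative state of $q$ is in $f_1 \combi f_2$. No symmetry is required in this direction because the two witnesses are already co-located at $p$.

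The main obstacle is the first direction: reconciling conservative states that originate from possibly different processes in $PDel_1$ and $PDel_2$. This is exactly where the symmetry hypothesis is essential, just as in the oblivious case where total symmetry was needed to move a prescribed "received set" to an arbitrary round and process. I would emphasize in the write-up that only one of the two predicates need be symmetric (by symmetry of the argument), matching the corresponding observation made for Theorem~\ref{combiCarefr}.
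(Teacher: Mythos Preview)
Your proposal is correct and follows essentially the same two-inclusion argument as the paper's proof: decompose $c = c_1 \combi c_2$ to get witnesses at a common process for $f_{min} \subseteq f_1 \combi f_2$, and use symmetry of $PDel_2$ to realign the processes $p_1, p_2$ for $f_1 \combi f_2 \subseteq f_{min}$. Two small remarks: the references to Lemma~\ref{minDom} should be to Definition~\ref{minReac} (or Lemma~\ref{consValid}), since that is where the witness $(c,p)$ for a state in a minimal conservative strategy actually comes from; and in the $f_{min} \subseteq f_1 \combi f_2$ direction, the paper avoids appealing to ``$f_1 \combi f_2$ is conservative'' by choosing $q_1, q_2$ to agree with $q$ on rounds $> r$, so that $q = q_1 \combi q_2$ exactly rather than merely $cons(q) = cons(q_1 \combi q_2)$.
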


        \begin{proof}[Proof idea]
            Since $f_1$ and $f_2$ are the minimal conservative strategies
            of $PDel_1$ and
            $PDel_2$, $Nexts^R{f_1}$ is the set of the conservative states of
            prefixes of $PDel_1$ and $Nexts^R_{f_2}$ is the set of the conservative
            states of prefixes of $PDel_2$.
            Also, the states accepted by $f_1 \combi f_2$ are the combination
            of the states accepted by $f_1$ and the states accepted by $f_2$.
            And the prefixes of $PDel_1 \combi PDel_2$ are the prefixes of
            $PDel_1$ combined with the prefixes of $PDel_2$ \textbf{for
            the same process}. Thanks to symmetry,
            we can take a prefix of $PDel_2$ and any process, and find
            a collection such that the process has that prefix.
            Therefore the combined prefixes for the same process are the
            same as the combined prefixes of $PDel_1$ and $PDel_2$.
            Thus, $Nexts^R_{f_1 \combi f_2}$ is the set of conservative states
            of prefixes of $PDel_1 \combi PDel_2$,
            and $f_1 \combi f_2$ is its minimal conservative strategy.
        \end{proof}

        \begin{theorem}[Minimal Conservative Strategy for Succession]
            \label{succReac}\\
            Let $PDel_1, PDel_2$ be two symmetric delivered predicates,
            $f_1$ and $f_2$ the minimal
            conservative strategies for, respectively, $PDel_1$ and $PDel_2$.
            Then $f_1 \leadsto f_2$ is the minimal conservative strategy for
            $PDel_1 \leadsto PDel_2$.
        \end{theorem}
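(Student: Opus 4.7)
My plan is to prove the set equality $f_1 \leadsto f_2 = f_{min}$, where $f_{min}$ is the minimal conservative strategy of $PDel_1 \leadsto PDel_2$ given by Definition~\ref{minReac}. Since, by that definition, a state $q$ lies in $f_{min}$ iff its conservative projection matches the prefix at some process of some collection $c_1[1,r].c_2$ with $c_1 \in PDel_1$, $c_2 \in PDel_2$, $r \geq 0$, the proof is a set-equality argument; domination of all conservative strategies then follows from Lemma~\ref{minDom}.

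For the forward inclusion $f_1 \leadsto f_2 \subseteq f_{min}$, I will split along the three disjuncts of the strategy-succession definition. If $q \in f_1$, I pick witnessing $c_1 \in PDel_1$ and $p \in \Pi$ using minimality of $f_1$; for any $c_2 \in PDel_2$ the collection $c_1[1, q.round].c_2$ lies in $PDel_1 \leadsto PDel_2$ and still has prefix $q$ at $p$. If $q \in f_2$, the witness $c_2$ is already in $PDel_1 \leadsto PDel_2$ via the choice $r = 0$. In the succession case $q = q_1 \leadsto q_2$, with witnesses $(c_1, p_1)$ for $q_1$ and $(c_2, p_2)$ for $q_2$, the two processes may differ; I will invoke symmetry of $PDel_2$ to obtain $c_2' \in PDel_2$ whose prefix at $p_1$ up to round $q_2.round$ equals that of $c_2$ at $p_2$. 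Then $c_1[1, q_1.round].c_2' \in PDel_1 \leadsto PDel_2$ exposes prefix $q_1 \leadsto q_2$ at $p_1$.

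For the converse inclusion, I will take any $c = c_1[1,r].c_2 \in PDel_1 \leadsto PDel_2$, any process $p$, any round $R > 0$, and consider the conservative state $q$ of $c$ at $p$ at round $R$. Three sub-cases cover this: if $r = 0$ then $c = c_2$ and $q \in f_2$; if $r \geq 1$ and $R \leq r$ then $q$ depends only on $c_1$ and lies in $f_1$; if $r \geq 1$ and $R > r$ then $q$ decomposes as $q_1 \leadsto q_2$ with $q_1$ the prefix of $c_1$ at $p$ up to round $r$ and $q_2$ the prefix of $c_2$ at $p$ up to round $R - r$, giving $q_1 \in f_1$, $q_2 \in f_2$, and hence $q$ in the third disjunct of $f_1 \leadsto f_2$.

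The main obstacle is the succession sub-case of the forward inclusion: $q_1$ and $q_2$ are independently witnessed, possibly on different processes, whereas a single succession-collection exposes a single process's prefix throughout its whole run. Symmetry of $PDel_2$ (or, equivalently, of $PDel_1$) is exactly the hypothesis that lets me realign the two witnesses on a common process. Every other step is definitional unfolding of predicate succession, strategy succession, and the minimal conservative strategy.
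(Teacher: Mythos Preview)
Your proposal is correct and follows essentially the same route as the paper: both prove the set equality $f_1 \leadsto f_2 = f_{min}$ by the same three-way case split on the disjuncts of $f_1 \leadsto f_2$ for one inclusion (using symmetry of $PDel_2$ to realign processes in the $q_1 \leadsto q_2$ case), and by the same case split on $r=0$, $R \leq r$, $R > r$ for the converse. The argument and the role of the symmetry hypothesis match the paper's proof exactly.
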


        \begin{proof}[Proof idea]
            Since $f_1$ and $f_2$ are the minimal conservative strategies
            of $PDel_1$ and
            $PDel_2$, $Nexts^R{f_1}$ is the set of the conservative states of
            prefixes of $PDel_1$ and $Nexts^R_{f_2}$ is the set of the conservative
            states of prefixes of $PDel_2$.
            Also, the states accepted by $f_1 \leadsto f_2$ are the succession
            of the states accepted by $f_1$ and the states accepted by $f_2$.
            And the prefixes of $PDel_1 \leadsto PDel_2$ are the successions
            of prefixes of $PDel_1$ and prefixes of $PDel_2$ \textbf{for
            the same process}. But thanks to symmetry,
            we can take a prefix of $PDel_2$ and any process, and find
            a collection such that the process has that prefix.

            Therefore the succession of prefixes for the same process are the
            same as the succession of prefixes of $PDel_1$ and $PDel_2$.
            Thus, $Nexts^R_{f_1 \leadsto f_2}$ is the set of conservative
            states of prefixes of $PDel_1 \leadsto PDel_2$,
            and is therefore its minimal conservative strategy.
        \end{proof}

        \begin{theorem}[Minimal Conservative Strategy for Repetition]
            \label{repetReac}\\
            Let $PDel$ be a symmetric delivered predicate,
            and $f$ be its minimal conservative strategy.
            Then $f^{\omega}$ is the minimal conservative strategy for
            $PDel^{\omega}$.
        \end{theorem}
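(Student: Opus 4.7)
The plan is to show $f^{\omega} = f_{min}$, where $f_{min}$ denotes the minimal conservative strategy for $PDel^{\omega}$, by proving both inclusions.

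For $f^{\omega} \subseteq f_{min}$, take $q = q_1 \leadsto q_2 \leadsto \ldots \leadsto q_k$ with each $q_i \in f$. By definition of the minimal conservative strategy for $PDel$, each $q_i$ is witnessed by some $c_i \in PDel$ and some process $p_i$, with $q_i(r') = c_i(r', p_i)$ for $r' \leq q_i.round$. Fix any $p \in \Pi$; by symmetry of $PDel$, for each $i$ there is $c'_i \in PDel$ whose behaviour at $p$ up to round $q_i.round$ matches that of $c_i$ at $p_i$. Plug the $c'_i$ into the $\omega$-construction with breakpoints $r_1 = 0$ and $r_{i+1} = r_i + q_i.round$ for $i \leq k$, extending arbitrarily beyond, to obtain $c \in PDel^{\omega}$. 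A round-by-round comparison, using the shift $r \mapsto r - r_i$ prescribed both by the $\omega$ definition and by the succession of states, shows $c(r,p) = q(r)$ for $r \leq q.round$, so $q \in f_{min}$.

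For $f_{min} \subseteq f^{\omega}$, let $q \in f_{min}$, witnessed by $c \in PDel^{\omega}$ and $p \in \Pi$. Unpack $c$ to obtain $(c_i)_{i \geq 1}$ in $PDel$ and $0 = r_1 < r_2 < \ldots$ with $c[r_i+1, r_{i+1}] = c_i[1, r_{i+1} - r_i]$. Let $k$ be minimal with $r_{k+1} \geq q.round$. For $i < k$, set $q_i$ to be the conservative state of round $r_{i+1} - r_i$ whose messages at round $r'$ are $c_i(r', p)$; for $i = k$, truncate to round $q.round - r_k$. Each such $q_i$ lies in $f$ since $c_i \in PDel$ and $p \in \Pi$ witness it. Then a direct computation of $q_1 \leadsto \ldots \leadsto q_k$, inverting the same shift used above, recovers the conservative state of $q$, so $q \in f^{\omega}$.

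The main obstacle is the careful alignment of the two shifting conventions: the $\omega$-operation on collections stipulates $c[r_i+1, r_{i+1}] = c_i[1, r_{i+1} - r_i]$, whereas the succession on states translates the messages of $q_{i+1}$ by $q_i.round$. Reconciling them forces one to track the partial sums $s_j = \sum_{i \leq j} q_i.round$ and observe that they coincide with $r_{j+1}$, and to check the boundary case where $q.round$ lies strictly inside a segment $[r_k+1, r_{k+1}]$, so the last piece $q_k$ is only a proper prefix of the behaviour contributed by $c_k$. Symmetry is crucial only for the first inclusion, where it allows all the pieces, originally indexed by different processes $p_i$, to be realigned to a common process $p$ so that the $\omega$-construction can be applied.
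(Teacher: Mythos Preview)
Your proposal is correct and follows essentially the same approach as the paper: both prove the two inclusions directly, using symmetry only for $f^{\omega}\subseteq f_{min}$ to realign the witnessing processes $p_i$ to a common process, and handling the $f_{min}\subseteq f^{\omega}$ direction by splitting $q$ along the breakpoints $(r_i)$ of $c$ with a possibly truncated last piece. The only cosmetic differences are that the paper fixes the common process to be $p_1$ (applying symmetry only for $i\geq 2$) and makes the ``future'' messages of each $q_i$ explicit so that $q=q_1\leadsto\cdots\leadsto q_k$ holds as an equality of full states rather than just of conservative parts.
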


        \begin{proof}[Proof idea]
            The idea is the same as in the succession.
        \end{proof}

    \subsection{Computing Heard-Of Predicates}
    \label{subsec:HOReac}

        Here we split from the analogy with oblivious strategies:
        the heard-of predicate of conservative strategies is hard to compute,
        as it dependss in intricate ways on the delivered predicate itself.

        Yet it is still possible to compute interesting information on
        this HO: upper bounds. These are overapproximations of the
        actual HO, but they can serve for formal verification of LTL properties.
        Indeed, the executions of an algorithm for the actual HO are contained
        in the executions of the algorithm for any overapproximation of the
        HO, and LTL properties must be true for all executions of the algorithm.
        So proving the property on an overapproximation also proves it on
        the actual HO.

        \begin{theorem}[Upper Bounds on HO of Minimal Conservative Strategies]
            \label{upperBoundHO}
            Let $PDel, PDel_1, PDel_2$ be delivered predicates containing $c_{tot}$.\\
            Let $f^{cons}, f_1^{cons}, f_2^{cons}$ be their respective minimal
            conservative strategies,\\
            and $f^{obliv}, f_1^{obliv}, f_2^{obliv}$
            be their respective minimal oblivious strategies.
            Then:
            \begin{itemize}
                \item $PHO_{f_1^{cons} \cup f_2^{cons}}(PDel_1 \cup PDel_2)
                    \subseteq \textit{HOProd}(\textit{Nexts}_{f_1^{obliv}} \cup
                    \textit{Nexts}_{f_2^{obliv}})$.
                \item $PHO_{f_1^{cons} \leadsto f_2^{cons}}(PDel_1 \leadsto PDel_2)
                    \subseteq \textit{HOProd}(\textit{Nexts}_{f_1^{obliv}} \cup
                    \textit{Nexts}_{f_2^{obliv}})$.
                \item $PHO_{f_1^{cons} \combi f_2^{cons}}(PDel_1 \combi PDel_2)
                    \subseteq \textit{HOProd}( \{ n_1 \cap n_2 \mid
                    n_1 \in \textit{Nexts}_{f_1^{obliv}} \land
                    n_2 \in \textit{Nexts}_{f_2^{obliv}}\})$.
                \item $PHO_{(f^{cons})^\omega}(PDel^{\omega})
                    \subseteq \textit{HOProd}(\textit{Nexts}_{f^{obliv}})$.
            \end{itemize}
        \end{theorem}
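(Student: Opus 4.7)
The plan is to treat all four cases uniformly. For each one, I fix an arbitrary HO collection $h$ on the left-hand side, produced by a valid execution $t$ of the given composite strategy on the given operation of PDels, and I show that for every $r > 0$ and every $p \in \Pi$, the value $h(r,p)$ lies in the subset of $\mathcal{P}(\Pi)$ described inside the right-hand side $\textit{HOProd}$. Containment is then immediate from Definition~\ref{HOProd}, and no joint constraint between distinct pairs $(r,p)$ has to be established, which matches the product shape of the bound.

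Fix $r$ and $p$. By validity there is an index $i$ with $t[i] = next_p$ and $q_p^t[i].round = r$; let $q = q_p^t[i]$, so $h(r,p) = q(q.round)$. The key observation, reused in every case, is: by Definition~\ref{minReac}, any $q' \in f_i^{cons}$ satisfies $q'(q'.round) = c(q'.round, p')$ for some $c \in PDel_i$ and $p' \in \Pi$; by Definition~\ref{minCarefr}, this forces $q'(q'.round) \in \textit{Nexts}_{f_i^{obliv}}$. The whole proof then reduces to unpacking, in each case, the form $q$ takes as a member of the composite strategy and reading off $q(q.round)$.

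For union, $q \in f_1^{cons} \cup f_2^{cons}$ and the observation concludes. For combination, $q = q_1 \combi q_2$ with $q_1 \in f_1^{cons}$ and $q_2 \in f_2^{cons}$ at round $r$, so $q(r) = q_1(r) \cap q_2(r)$ is of the required form $n_1 \cap n_2$ with $n_i \in \textit{Nexts}_{f_i^{obliv}}$. For succession, either $q$ already lies in $f_1^{cons} \cup f_2^{cons}$ (handled as for union), or $q = q_1 \leadsto q_2$, and since $r = q_1.round + q_2.round > q_1.round$ the definition of $\leadsto$ on strategies gives $q(r) = q_2(q_2.round) \in \textit{Nexts}_{f_2^{obliv}}$. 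For repetition, $q = q_1 \leadsto \cdots \leadsto q_k$ with every $q_j \in f^{cons}$, and the same offset computation isolates $q(r) = q_k(q_k.round) \in \textit{Nexts}_{f^{obliv}}$.

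The main obstacle is verifying the offset identities for $\leadsto$ against its definition on strategies -- in particular that $(q_1 \leadsto q_2)(r)$ equals $q_2(r - q_1.round)$ at the top round, and the iterated form of this fact for the $\omega$ case -- and correctly splitting the succession case between the ``direct'' branch and the $f_1 \cup f_2$ branch. The hypothesis $c_{tot} \in PDel$ plays a supporting role, ensuring that the composite strategies are valid on the corresponding operations so that each $PHO$ is well defined, but it is not used in the containment computation itself.
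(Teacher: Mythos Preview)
Your argument is correct and takes a genuinely different route from the paper. The paper's proof is a two-line reduction: every oblivious strategy is in particular a conservative strategy, so the minimal conservative strategy of each composite predicate dominates its minimal oblivious strategy, and then Theorem~\ref{oblivOpsHO} supplies the right-hand side. In other words, the paper works at the level of the domination order and reuses the exact computation already done for oblivious strategies. You instead argue pointwise at the level of states: you unpack each composite strategy via its defining operation, extract the current-round slice $q(q.round)$ of any accepted state, and observe that this slice is forced (via Definition~\ref{minReac} and then Definition~\ref{minCarefr}) to land in the appropriate $\textit{Nexts}$ set. Your approach is more elementary and self-contained --- it needs neither Theorems~\ref{unionReac}--\ref{repetReac} nor Theorem~\ref{oblivOpsHO} --- whereas the paper's is shorter and more conceptual once those earlier results are in hand.

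One small inaccuracy: your final sentence misattributes the role of $c_{tot}$. Containing $c_{tot}$ is not what guarantees validity of the composite conservative strategies; that is the job of Theorems~\ref{unionReac}--\ref{repetReac} (via Lemma~\ref{minDom}), which identify the composites as minimal conservative strategies. The hypothesis $c_{tot} \in PDel$ is instead what makes Lemma~\ref{carefrHO} applicable in the paper's route, and in your route it is not used at all for the containment computation. This does not affect the correctness of your main argument.
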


        \begin{proof}[Proof idea]
        These bounds follow from the fact that an oblivious strategy,
        is a conservative strategy, and
        thus the minimal conservative strategy dominates the minimal
        oblivious strategy.
        \end{proof}

\section{Conclusion}
\label{conclu}

    To summarize, we propose operations on delivered predicates that
    allow the construction of complex predicates from simpler ones. The
    corresponding operations on strategies behave nicely regarding dominating
    strategies, for the conservative and oblivious strategies. This entails
    bounds and characterizations of the dominating heard-of predicate
    for the constructions.

    What needs to be done next comes in two kinds: first, the logical
    continuation is to look for constraints on delivered predicates for
    which we can compute the dominating heard-of predicate of conservative
    strategies. More ambitiously, we will study strategies looking in the
    future, i.e. strategies that can take into account messages from
    processes that have already reached a strictly higher round than the
    recipient. These strategies are useful for inherently
    asymmetric delivered predicates. For example, message loss
    is asymmetric, in the sense that we cannot
    force processes to receive the same set of messages.

\paragraph{Funding} This work was supported by project PARDI ANR-16-CE25-0006.



\bibliography{references}

\begin{thebibliography}{10}

\bibitem{BielyKSet}
Martin Biely, Peter Robinson, Manfred Schmid, Ulrich~Schwarz, and Kyrill
  Winkler.
\newblock Gracefully degrading consensus and k-set agreement in directed
  dynamic networks.
\newblock {\em Theoretical Computer Science}, 726:41--77, 2018.
\newblock \href {https://doi.org/10.1016/j.tcs.2018.02.019}
  {\path{doi:10.1016/j.tcs.2018.02.019}}.

\bibitem{CharronBostHOL}
Bernadette Charron-Bost, Henri Debrat, and Stephan Merz.
\newblock Formal verification of consensus algorithms tolerating malicious
  faults.
\newblock In {\em Stabilization, Safety, and Security of Distributed Systems},
  pages 120--134, 2011.
\newblock \href {https://doi.org/10.1007/978-3-642-24550-3_11}
  {\path{doi:10.1007/978-3-642-24550-3_11}}.

\bibitem{CharronBostApprox}
Bernadette Charron-Bost, Matthias F{\"u}gger, and Thomas Nowak.
\newblock Approximate consensus in highly dynamic networks: The role of
  averaging algorithms.
\newblock In {\em Automata, Languages, and Programming}, pages 528--539, 2015.
\newblock \href {https://doi.org/10.1007/978-3-662-47666-6_42}
  {\path{doi:10.1007/978-3-662-47666-6_42}}.

\bibitem{CharronBostHO}
Bernadette Charron-Bost and Andr{\'e} Schiper.
\newblock The heard-of model: computing in distributed systems with benign
  faults.
\newblock {\em Distributed Computing}, 22(1):49--71, April 2009.
\newblock \href {https://doi.org/10.1007/s00446-009-0084-6}
  {\path{doi:10.1007/s00446-009-0084-6}}.

\bibitem{CouloumaConsensus}
{\'E}tienne Coulouma, Emmanuel Godard, and Joseph Peters.
\newblock A characterization of oblivious message adversaries for which
  consensus is solvable.
\newblock {\em Theoretical Computer Science}, 584:80--90, 2015.
\newblock \href {https://doi.org/10.1016/j.tcs.2015.01.024}
  {\path{doi:10.1016/j.tcs.2015.01.024}}.

\bibitem{DragoiPsync}
Cezara Dr\u{a}goi, Thomas~A. Henzinger, and Damien Zufferey.
\newblock Psync: A partially synchronous language for fault-tolerant
  distributed algorithms.
\newblock {\em SIGPLAN Not.}, 51(1):400--415, January 2016.
\newblock \href {https://doi.org/10.1145/2914770.2837650}
  {\path{doi:10.1145/2914770.2837650}}.

\bibitem{ElradDecomp}
Tzilla Elrad and Nissim Francez.
\newblock Decomposition of distributed programs into communication-closed
  layers.
\newblock {\em Science of Computer Programming}, 2(3):155--173, 1982.
\newblock \href {https://doi.org/10.1016/0167-6423(83)90013-8}
  {\path{doi:10.1016/0167-6423(83)90013-8}}.

\bibitem{GafniRRFD}
Eli Gafni.
\newblock Round-by-round fault detectors (extended abstract): Unifying
  synchrony and asynchrony.
\newblock In {\em 17th ACM Symposium on Principles of Distributed Computing},
  PODC '98, pages 143--152, 1998.
\newblock \href {https://doi.org/10.1145/277697.277724}
  {\path{doi:10.1145/277697.277724}}.

\bibitem{MaricCutoff}
Ognjen Mari{\'{c}}, Christoph Sprenger, and David Basin.
\newblock Cutoff bounds for consensus algorithms.
\newblock In {\em Computer Aided Verification}, pages 217--237, 2017.
\newblock \href {https://doi.org/10.1007/978-3-319-63390-9_12}
  {\path{doi:10.1007/978-3-319-63390-9_12}}.

\bibitem{NowakTopoConsensus}
Thomas Nowak, Ulrich Schmid, and Kyrill Winkler.
\newblock Topological characterization of consensus under general message
  adversaries.
\newblock In {\em 2019 ACM Symposium on Principles of Distributed Computing},
  PODC '19, 2019.
\newblock \href {https://doi.org/10.1145/3293611.3331624}
  {\path{doi:10.1145/3293611.3331624}}.

\bibitem{SantoroLoss}
Nicola Santoro and Peter Widmayer.
\newblock Time is not a healer.
\newblock In {\em 6th Symposium on Theoretical Aspects of Computer Science
  STACS 89}, pages 304--313, 1989.
\newblock \href {https://doi.org/10.1007/BFb0028994}
  {\path{doi:10.1007/BFb0028994}}.

\bibitem{ShimiOPODIS18}
Adam Shimi, Aur{\'e}lie Hurault, and Philippe Qu{\'e}innec.
\newblock Characterizing asynchronous message-passing models through rounds.
\newblock In {\em 22nd Int'l Conf.\ on Principles of Distributed Systems
  (OPODIS 2018)}, pages 18:1--18:17, 2018.
\newblock \href {https://doi.org/10.4230/LIPIcs.OPODIS.2018.18}
  {\path{doi:10.4230/LIPIcs.OPODIS.2018.18}}.

\end{thebibliography}


\newpage
\appendix

\section{Tools}
\label{app:tools}

        \subsection{Timing Functions}

        A timing function of an execution captures the round at which a message
        is delivered: for a message sent in round $r'$ by $k$ to $j$,
        $time(r',k,j)$ is the round at which this message is delivered to $j$.
        Note that $time(r',k,j) = 0$ if and only if no message sent
        from $k$ to $j$ at round $r'$ is delivered in this execution.

        \begin{definition}[Timing Function]
            A \textbf{timing function} is a function
            $\mathbb{N}^* \times \Pi \times \Pi \mapsto \mathbb{N}$.

            For $t$ an execution, the timing function of $t$, $time_t
            \triangleq$ the timing function such that
            $\forall r > 0,\forall r' > 0, \forall k,j \in \Pi:
            time_t(r',k,j) = r \iff (\exists i \geq 0: t[i] = deliver(r',k,j)
            \land q^t_j[i].round = r)$.
        \end{definition}

        The standard execution reorders deliveries and changes of round
        such that all the deliveries for a given round happen before
        the changes of round for all processes.

        \begin{definition}[Standard Execution of a timing function]
            Let $time$ be a timing function and $ord$ be any function
            taking a set and returning an ordered sequence of its elements.
            The specific ordering doesn't matter.

            The \textbf{standard execution with timing $time$} is $st_{time}
            \triangleq \prod\limits_{r \in \mathbb{N}^*} dels_r.nexts$,
            where $dels_r = ord(\{deliver(r',k,j) \mid r' > 0 \land k,j \in \Pi
            \land time(r',k,j) = r\})$ and $nexts = ord(\{next_p \mid p \in \Pi\})$.
        \end{definition}

        \begin{lemma}[Correctness of Standard Execution with Timing]
            \label{stCorrect}
            Let $time$ be a timing function. Then
            $(\forall r > 0, \forall k,j \in \Pi: time(r,k,j) = 0 \lor
            time(r,k,j) \geq r)
            \implies st_{time}$ is an execution.
        \end{lemma}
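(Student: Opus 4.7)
The plan is to verify directly the three defining conditions of an execution (delivery-after-sending, unique delivery, once-stopped-forever-stopped) for $st_{time}$, exploiting the round-by-round block structure $\prod_{r} dels_r.nexts$ together with the hypothesis $time(r,k,j) = 0 \lor time(r,k,j) \geq r$.

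First I would fix notation for the positions: for each round $R \geq 1$, let $I_R$ denote the interval of indices occupied by $dels_R$ in $st_{time}$, and let $J_R$ be the interval occupied by the $nexts$ block right after $dels_R$. Because $nexts = ord(\{next_p \mid p \in \Pi\})$ contains exactly one $next_k$ for each $k \in \Pi$, a straightforward induction on $R$ shows that at the start of $I_R$, each process $k$ has been the subject of exactly $R-1$ $next_k$ transitions. This is the bookkeeping fact that drives everything.

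Next I would handle (Delivery after sending). Suppose $st_{time}[i] = deliver(r',k,j)$. By construction of $dels_R$, this transition occurs inside $I_R$ for $R = time(r',k,j)$; in particular $R \neq 0$, so the hypothesis gives $R \geq r'$. By the bookkeeping fact, the number of $next_k$ transitions strictly before index $i$ is at least $R - 1 \geq r' - 1$, which is exactly what the condition demands. For (Unique delivery): for a fixed triple $\langle r',k,j \rangle$, the value $R = time(r',k,j)$ is unique, so $deliver(r',k,j)$ can only appear inside the single block $dels_R$; and within that block it appears at most once because $dels_R$ is built by applying $ord$ to a \emph{set}. For (Once stopped, forever stopped): the transition $stop$ never occurs in $st_{time}$, so the implication is vacuously true.

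There is no real obstacle here beyond being careful that the induction establishing "exactly $R-1$ occurrences of $next_k$ before $I_R$" is correctly phrased; I expect this to be the step where most of the writing goes. One small point worth flagging is that the set $\{deliver(r',k,j) \mid time(r',k,j)=R\}$ may be infinite in general (there are infinitely many values of $r'$), so one should either assume $ord$ produces an $\omega$-sequence of its argument or observe that, thanks to the hypothesis, only $r' \leq R$ can satisfy $time(r',k,j) = R$, leaving $dels_R$ finite; the latter is cleaner and fits naturally with the block structure.
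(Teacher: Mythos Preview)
Your proposal is correct and follows essentially the same route as the paper: both verify the three execution conditions directly from the block structure $\prod_r dels_r.nexts$, using the hypothesis $time(r,k,j)\geq r$ to count at least $r-1$ occurrences of $next_k$ before the delivery block, the set-based definition of $dels_R$ for uniqueness, and the absence of $stop$ for the last clause. Your treatment is in fact slightly more careful than the paper's---you make the bookkeeping induction explicit and you flag the finiteness of $dels_R$, which the paper leaves implicit---but the underlying argument is the same.
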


        \begin{proof}
            \begin{itemize}
                \item \textbf{(Delivered after sending)}
                    Let $r > 0$ and $k,j \in \Pi$. If $time(r,k,j) = 0$,
                    then the message is never delivered, and we don't have
                    to consider it. If not, then by hypothesis
                    $time(r,k,j) \geq r$. This means $\exists i \geq r:
                    deliver(r,k,j) \in dels_i$.

                    By construction of the standard execution,
                    there are $i-1$ occurrences of the sequence $nexts$
                    before the sequence $dels_i$. This means
                    there are $i-1 \geq r-1$ $next_k$ before,
                    which allows us to conclude.
                \item \textbf{(Delivered only once)}
                    Let $r > 0$ and $k,j \in \Pi$. If $\exists i \geq 0:
                    st_{time}[i] = deliver(r,k,j)$, then it is in
                    $dels_{time(r,k,j)}$.
                    We conclude that there is only one delivery of this
                    message.
                  \item \textbf{(Once stopped, forever stopped)}
                    The standard execution does not contain any $stop$.
            \end{itemize}
        \end{proof}

        \begin{lemma}[Heard-Of Collection of Timing Function]
            \label{hoTiming}
            Let $t$ be a valid execution, and $time$ be its timing function.
            Then $\forall r > 0, \forall p \in \Pi:
            h_t(r,p) = \{ q \in \Pi \mid time(r,q,p) \in [1,r]\}$.
        \end{lemma}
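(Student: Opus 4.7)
The plan is to prove the two inclusions separately by unfolding the definitions, using the round-counting identity $q_p^t[i].round = \mathbf{card}(\{l<i\mid t[l]=next_p\})+1$ as the main tool. This identity makes $q_p^t[\cdot].round$ monotone non-decreasing in the step index and increasing by exactly one whenever $t$ reads $next_p$, which is what lets me translate between "when is $p$ at round $r$" and "how many $next_p$ transitions have already occurred before a given delivery".

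For the inclusion $h_t(r,p) \subseteq \{q\mid time_t(r,q,p) \in [1,r]\}$, I take $k \in h_t(r,p)$ and extract from the definition an index $i$ at which $t[i]=next_p$, $q_p^t[i].round = r$, and $\langle r,k\rangle \in q_p^t[i].mes$. The last condition produces some earlier index $l<i$ with $t[l]=deliver(r,k,p)$. Because $\{m<l\mid t[m]=next_p\} \subseteq \{m<i\mid t[m]=next_p\}$, I immediately get $q_p^t[l].round \leq q_p^t[i].round = r$, and $q_p^t[l].round \geq 1$ by construction. Thus $time_t(r,k,p)=q_p^t[l].round \in [1,r]$.

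For the reverse inclusion, assume $time_t(r,k,p) \in [1,r]$; by definition of $time_t$, there is an index $i$ with $t[i]=deliver(r,k,p)$ and $q_p^t[i].round \leq r$. Validity of $t$ ensures $p$ performs infinitely many $next_p$ transitions, so I can let $j$ be the index of its $r$-th such transition, for which $t[j]=next_p$ and $q_p^t[j].round = r$. The key step is to verify $i<j$: since $t[j]=next_p\neq deliver(r,k,p)=t[i]$ we have $i\neq j$, and if $j<i$ then the $r-1$ $next_p$ transitions at indices $<j$ together with the one at $j$ itself give at least $r$ occurrences of $next_p$ strictly below $i$, forcing $q_p^t[i].round \geq r+1$ and contradicting $q_p^t[i].round \leq r$. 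Hence $i<j$, so the delivery at step $i$ is already recorded in $q_p^t[j].mes$, giving $\langle r,k\rangle \in q_p^t[j].mes$ and therefore $k \in h_t(r,p)$.

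The main obstacle is precisely this counting argument showing $i<j$ in the second direction; once it is in hand, both inclusions follow mechanically by unpacking the definitions of $h_t$, $time_t$, and $q_p^t$, with validity of $t$ used only to guarantee that the $r$-th $next_p$ step $j$ exists.
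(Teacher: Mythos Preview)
Your proof is correct and follows essentially the same approach as the paper's: both prove the two inclusions by unfolding the definitions of $h_t$, $time_t$, and $q_p^t$, using the monotonicity of the round counter and validity of $t$ to locate the relevant $next_p$ transition. Your version is considerably more detailed than the paper's rather terse argument, in particular your explicit counting argument establishing $i<j$ in the $\supseteq$ direction, which the paper leaves implicit.
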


        \begin{proof}
            Let $i \geq 0$ such that $\exists p \in \Pi: t[i]=next_p$.
            Let $r = q^t_p[i].round$.
            We show both side of $h_t(r,p) = \{ q \in \Pi \mid time(r,q,p) \in [1,r]\}$.
            \begin{itemize}
                \item Let $q \in h_t(r,p)$. Then it is delivered in a round $\leq r$,
                    and thus $time(r,q,p) \in [1,r]$.
                \item Let $q \in \Pi$ such that $time(r,q,p) \in [1,r]$. Then
                    by definition of $time$, the message sent by $q$ at round $r$
                    is delivered to $p$ in $t$ at most at round $r$. Thus,
                    it is in the messages from the current round when
                    going to round $r+1$, and $q \in h_t(r,p)$.
            \end{itemize}
        \end{proof}

\section{Proofs for Oblivious Strategies}
\label{app:obliv}

  \subsection{Minimal Oblivious Strategies}

  We use a necessary and sufficient condition for an oblivious strategy to be valid in
  the rest of the proofs.

        \begin{lemma}[Necessary and Sufficient Condition for Validity of a Oblivious Strategy]
            \label{carefrValid}
            Let $PDel$ be a delivered predicate and $f$ be an oblivious strategy.
            Then $f$ is valid for $PDel \iff
            f \supseteq \{q \mid \exists c \in PDel, \exists p \in \Pi,
            \exists r  > 0: obliv(q) = c(r,p) \}$.
        \end{lemma}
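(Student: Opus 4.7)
The plan is to prove the two directions separately. Write $S = \{q \mid \exists c \in PDel, p \in \Pi, r > 0 : obliv(q) = c(r,p)\}$ for the right-hand side of the inclusion. The key observation behind both directions is that in the standard execution $st_{time}$ associated with the eager timing $time(r,k,j) = r$ if $k \in c(r,j)$ and $0$ otherwise (valid by Lemma~\ref{stCorrect}), every round-$r$ next of $p$ is performed with oblivious value exactly $c(r,p)$, since $st_{time}$ orders all round-$r$ deliveries before any round-$r$ next.

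For the $\Leftarrow$ direction, we assume $f \supseteq S$ and derive a contradiction from a supposedly non-valid execution $t \in execs_f(c)$ with $c \in PDel$. Let $r^*$ be the smallest round at which some process is stuck in $t$ and pick a stuck $p^*$ at round $r^*$. By minimality every process performs at least $r^* - 1$ nexts in $t$, so the executions-of-$c$ iff forces every message in $c(r^*, p^*)$ to eventually be delivered to $p^*$. From that point on $obliv(q^t_{p^*}[i]) = c(r^*, p^*)$, which lies in $S \subseteq f$ via the witness $(c, p^*, r^*)$. By obliviousness of $f$, $q^t_{p^*}[i] \in f$ for all sufficiently large $i$, contradicting the fairness clause for the stuck $p^*$.

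For the $\Rightarrow$ direction we argue by contrapositive. Given $q_0 \in S \setminus f$ and witnesses $c_0 \in PDel$, $p_0 \in \Pi$, $r_0 > 0$ with $obliv(q_0) = c_0(r_0, p_0)$, we exhibit a non-valid $t \in execs_f(c_0)$ obtained from $st_{time}$ on $c_0$ by selectively stopping processes. Define inductively a blocking round $r_p^* \in \mathbb{N}^* \cup \{\infty\}$ for each $p$: with $A_r = \{p : r_p^* \geq r\}$ and $A_1 = \Pi$, set $r_p^* = r$ for each $p \in A_r$ whose candidate oblivious value $c_0(r,p) \cap A_r$ lies outside $\{obliv(q) \mid q \in f\}$. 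Then let $t$ be $st_{time}$ in which each $p$ performs only its first $r_p^* - 1$ nexts and every delivery $deliver(r, k, j)$ with $r > r_k^*$ is dropped. The nexts-count correspondence makes $t$ an execution of $c_0$; every remaining next has state in $f$ by definition of $r_p^*$; and each stuck $p$ eventually stabilizes at an oblivious value $c_0(r_p^*, p) \cap A_{r_p^*}$ outside $\{obliv(q) \mid q \in f\}$ (subsequent deliveries to $p$ concern strictly higher rounds and do not affect this value), securing fairness. Finally, either $A_{r_0} = \Pi$, in which case $p_0$ is stuck at round $r_0$ because its oblivious value $c_0(r_0, p_0) = obliv(q_0)$ lies outside $\{obliv(q) \mid q \in f\}$, or some process is already stuck at a round below $r_0$; either way $t$ is not valid.

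The main obstacle is the cascade effect: stopping process $k$ at round $r_k^*$ withdraws $k$'s future messages from other processes and could in principle destabilize their earlier-admissible states. The round-by-round inductive definition of $r_p^*$ sidesteps this cleanly, since the decision at round $r$ uses only decisions already made at strictly lower rounds. Obliviousness of $f$ is used both to turn the membership $q \in f$ into a condition on $obliv(q)$, and to guarantee that once a stuck process's oblivious value sits outside $\{obliv(q) \mid q \in f\}$ it remains outside, which finally secures fairness.
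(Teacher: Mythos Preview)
Your proof is correct and self-contained, but it takes a genuinely different route from the paper. The paper does not argue from the operational definitions of execution and validity at all: it simply invokes the characterization already established in~\cite{ShimiOPODIS18} (an oblivious $f$ is valid for $PDel$ iff $c(r,p)\in\textit{Nexts}_f$ for every $c\in PDel$, $r>0$, $p\in\Pi$) and then checks in two lines that this condition is literally the inclusion $f\supseteq S$. Your argument instead reproves the cited characterization from scratch: the $\Leftarrow$ direction via the minimum-blocked-round trick, and the $\Rightarrow$ direction via the round-by-round cascade construction of a blocking execution. The paper's route buys brevity at the price of an external dependency; yours buys a standalone argument that makes explicit where obliviousness is used (both to reduce $q\in f$ to a condition on $obliv(q)$ and to freeze that condition once a process is stuck). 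One small technical remark on your $\Rightarrow$ construction: if every process eventually blocks, the filtered $st_{time}$ is a finite sequence and must be padded with $stop^\omega$ to remain an execution; fairness then still holds since each blocked process's state is frozen with oblivious value outside $\textit{Nexts}_f$.
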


        \begin{proof}
            From the version in OPODIS 2018, $f$ has to satisfy
            $\forall c \in \textit{PDel}, \forall r > 0,
            \forall p \in \Pi: c(r,p) \in \textit{Nexts}_f$.

            We show the equivalence of this condition with our own,
            which allow us to conclude by transitivity of equivalence.
            \begin{itemize}
                \item $(\Longrightarrow)$ We assume our condition holds and
                    prove the one form OPODIS 2018.

                    Let $c \in PDel, r > 0$ and $p \in \Pi$: we want to show that
                    $c(r,p) \in \textit{Nexts}_f$. That is to say, that all states
                    whose present corresponds to this oblivious state
                    are accepted by $f$.

                    Let $q$ such that $obliv(q) = c(r,p)$.
                    We have the collection $c$, the round $r$ and the
                    process $p$ to apply our condition, and thus $q \in f$.

                    Hence, $c(r,p) \in \textit{Nexts}_f$.
                \item $(\Longleftarrow)$ We assume the condition from OPODIS 2018
                    holds and we prove ours.

                    Let $q$ such that $\exists c \in PDel, \exists p \in \Pi,
                    \exists r \leq q.round: obliv(q) = c(r,p)$.
                    By hypothesis, we have $c(r,p) \in \textit{Nexts}_f$.

                    We conclude that $q \in f$.
            \end{itemize}
        \end{proof}

        \begin{lemma*}[(\ref{domMinObliv}
                       Domination of Minimal Oblivious Strategy]
            Let $PDel$ be a PDel and $f_{min}$ be its minimal oblivious strategy.
            Then $f_{min}$ is a dominating oblivious strategy for $PDel$.
        \end{lemma*}

        \begin{proof}
            First, $f_{min}$ is valid for $PDel$ by application of
            Lemma~\ref{carefrValid}.
            Next, we take another oblivious strategy $f$, which is valid
            for $PDel$. Lemma~\ref{carefrValid} now gives us that
            $f_{min} \subseteq f$. Hence, when
            $f_{min}$ allow a change of round, so does $f$.
            This entails that all executions of $f_{min}$ for
            $PDel$ are also executions of $f$ for $PDel$,
            and thus that heard-of predicate generated by $f_{min}$ is contained
            in the one generated by $f$.
        \end{proof}

  \subsection{Operations Maintain Minimal Oblivious Strategies}

        \begin{theorem*}[(\ref{unionCarefr})
                        Minimal Oblivious Strategy for Union and Succession]
            Let $PDel_1, PDel_2$ be two delivered predicates, $f_1$ the minimal
            oblivious strategy for $PDel_1$, and $f_2$ the
            minimal oblivious strategy for $PDel_2$.
            Then $f_1 \cup f_2$ is the minimal oblivious strategy for
            $PDel_1 \cup PDel_2$ and $PDel_1 \leadsto PDel_2$.
        \end{theorem*}

        \begin{proof}
            We first show that the minimal oblivious strategies of
            $PDel_1 \cup PDel_2$ and $PDel_1 \leadsto PDel_2$ are equal.
            Hence, we prove
            $\{q \mid \exists c \in PDel_1 \cup PDel_2, \exists p \in \Pi,
            \exists r > 0: obliv(q) = c(r,p) \} =
            \{q \mid \exists c \in PDel_1 \leadsto PDel_2, \exists p \in \Pi,
            \exists r > 0: obliv(q) = c(r,p) \}$.
            \begin{itemize}
                \item $(\subseteq)$ Let $q$ such that $\exists c \in PDel_1
                    \cup PDel_2, \exists p \in \Pi, \exists r > 0: obliv(q)
                    = c(r,p)$.
                    \begin{itemize}
                        \item If $c \in PDel_1$, then we take $c_2 \in PDel_2$
                            $c' = c[1,r].c_2$. Since $c' \in
                            c \leadsto c_2$, we have $c' \in PDel_1 \leadsto PDel_2$.
                            And by definition of $c'$, $c'(r,p) = c(r,p)$.

                            We thus have $c', p$ and $r$ showing that $q$
                            is in the set on the right.
                        \item If $c \in PDel_2$, then
                            $c \in PDel_1 \leadsto PDel_2$
                            We thus have $c, p$ and $r$ showing that $q$
                            is in the set on the right.
                    \end{itemize}
                \item $(\supseteq)$ Let $q$ such that $\exists c \in PDel_1
                    \leadsto PDel_2, \exists p \in \Pi, \exists r > 0: obliv(q)
                    = c(r,p)$.
                    \begin{itemize}
                        \item If $c \in PDel_2$, then $c \in PDel_1 \cup
                            PDel_2$.
                            We thus have $c, p$ and $r$ showing that $q$
                            is in the set on the left.
                        \item If $c \notin PDel_2$, there exist
                            $c_1 \in PDel_1, c_2 \in PDel_2$ and
                            $r' > 0$ such that $c = c_1[1,r'].c_2$.
                            \begin{itemize}
                                \item If $r \leq r'$, then by definition of
                                    $c$, we have $c(r,p) = c_1(r,p)$.
                                    We thus have $c_1, p$ and $r$ showing that $q$
                                    is in the set on the left.
                                \item If $r > r'$, then $c(r,p) = c_2(r-r',p)$
                                    We thus have $c_2, p$ and $(r-r')$
                                    showing that $q$ is in the set on the left.
                            \end{itemize}
                    \end{itemize}
            \end{itemize}

            We show that $f_1 \cup f_2 =
            \{q \mid \exists c \in PDel_1 \cup PDel_2, \exists p \in \Pi,
            \exists r > 0: obliv(q) = c(r,p) \}$, which allows us to conclude
            by Definition~\ref{minCarefr}.
            \begin{itemize}
                \item Let $q \in f_1 \cup f_2$. We fix $q \in f_1$
                    (the case $q \in f_2$ is completely symmetric).

                    Then because $f_1$ is the minimal oblivious strategy of $PDel_1$,
                    by application of Lemma~\ref{carefrValid},
                    $\exists c_1 \in PDel_1,\exists p \in \Pi, \exists r > 0$
                    such that $c_1(r,p)= obliv(q)$.
                    $c_1 \in PDel_1 \subseteq PDel_1 \cup PDel_2$.
                    We thus have $c_1, p$ and $r$ showing that $q$
                    is in the minimal oblivious strategy for $PDel_1 \cup PDel_2$.
                \item Let $q$ such that $\exists c \in PDel_1 \cup PDel_2,
                    \exists p \in \Pi, \exists r > 0: c(r,p)= obliv(q)$.
                    By definition of union, $c$ must be in $PDel_1$ or
                    in $c \in PDel_2$; we fix $c \in PDel_1$
                    (the case $PDel_2$ is symmetric).

                    Then Definition~\ref{minCarefr} gives us that
                    $q$ is in the minimal oblivious strategy of $PDel_1$, that is $f_1$.
                    We conclude that $q \in f_1 \cup f_2$.
            \end{itemize}
        \end{proof}

        \begin{theorem*}[(\ref{repetCarefr})
                         Minimal Oblivious Strategy for Repetition]
            Let $PDel$ be a delivered predicate, and
            $f$ be its minimal oblivious strategy.
            Then $f$ is the minimal oblivious strategy for
            $PDel^{\omega}$.
        \end{theorem*}

        \begin{proof}
            We show that $f =
            \{q \mid \exists c \in PDel^{\omega}, \exists p \in \Pi,
            \exists r > 0: obliv(q) = c(r,p) \}$, which allows us to conclude
            by Definition~\ref{minCarefr}.
            \begin{itemize}
                \item $(\subseteq)$ Let $q \in f$. By minimality of $f$
                    for $PDel$,
                    $\exists c \in PDel, \exists p \in \Pi, \exists r > 0:
                    obliv(q) = c(r,p)$.

                    We take $c' \in PDel^{\omega}$ such that
                    $c_1 = c$ and $r_2 = r$; the other $c_i$ and
                    $r_i$ don't matter for the proof.
                    By definition of repetition, we get
                    $c'(r,p) = c(r,p) = obliv(q)$.

                    We have $c', p$ and $r$ showing that
                    $q$ is in the minimal oblivious strategy of
                    $PDel^{\omega}$.
                \item $(\supseteq)$ Let $q$ such that
                    $\exists c \in PDel^{\omega}, \exists p \in \Pi, \exists r > 0:
                    obliv(q) = c(r,p)$.
                    By definition of repetition, there are
                    $c_i \in PDel$ and $0< r_i < r_{i+1}$
                    such that $r \in [r_i+1,r_{i+1}]$ and
                    $c(r,p) = c_i(r-r_i,p)$.

                    We have found $c_i, p$ and $(r - r_i)$ showing that
                    $q$ is in the minimal oblivious strategy for $PDel$.
                    And since $f$ is the minimal oblivious strategy for $PDel$,
                    we get $q \in f$.
            \end{itemize}
        \end{proof}

        \begin{theorem*}[(\ref{combiCarefr})
                         Minimal Oblivious Strategy for Combination]
            Let $PDel_1, PDel_2$ be two totally symmetric delivered predicate,
            $f_1$ the minimal
            oblivious strategy for $PDel_1$, and $f_2$ the
            minimal oblivious strategy for $PDel_2$.
            Then $f_1 \combi f_2$ is the minimal oblivious strategy for
            $PDel_1 \combi PDel_2$.
        \end{theorem*}

        \begin{proof}
            We show that $f_1 \combi f_2 =
            \{q \mid \exists c \in PDel_1 \combi PDel_2, \exists p \in \Pi,
            \exists r > 0: obliv(q) = c(r,p) \}$, which allows us to apply
            Lemma~\ref{minCarefr}.

            \begin{itemize}
                \item
                    Let $q \in f_1 \combi f_1$. Then $\exists q_1 \in f_1,
                    \exists q_2 \in f_2$ such that $q = q_1 \combi q_2$.
                    This also means that $q_1.round = q_2.round = q.round$.

                    By minimality of $f_1$ and $f_2$,
                    $\exists c_1 \in PDel_1, \exists  p_1 \in \Pi, \exists r_1 > 0:
                    c_1(r_1,p_1) = obliv(q_1)$ and
                    $\exists c_2 \in PDel_2,\exists p_2 \in \Pi,\exists r_2 > 0:
                    c_2(r_2,p_2) = obliv(q_2)$.

                    Moreover, total symmetry of $PDel_2$ ensures that
                    $\exists c'_2 \in PDel_2:c'_2(r_1,p_1)
                    = c_2(r_2,p_2)$.

                    We take $c = c_1 \combi c'_2$.
                    $obliv(q) = obliv(q_1) \cap
                    obliv(q_2) = c_1(r_1,p_1) \cap c_2(r_2,p_2)
                    = c_1(r_1,p_1) \cap c'_2(r_1,p_1)
                    = c(r_1,p_1)$.

                    We have $c$, $p_1$ and $r_1$ showing that
                    $q$ is in the minimal oblivious strategy for $PDel_1 \combi PDel_2$.
                \item Let $q$ such that $\exists c \in PDel_1 \combi PDel_2,
                    \exists p \in \Pi, \exists r > 0: c(r,p)= obliv(q)$.
                    By definition of Combination, $\exists c_1 \in PDel_1,
                    \exists c_2 \in PDel_2: c = c_1 \combi c_2$.

                    We take $q_1$ such that $q_1.round = r,
                    obliv(q_1) = c_1(r,p)$ and
                    $\forall r' \neq r: q_1(r') = q(r')$; we also
                    take $q_2$ such that $q_2.round = r,
                    obliv(q_2) = c_2(r,p)$ and
                    $\forall r' \neq r: q_2(r') = q(r')$.

                    Then $q = q_1 \combi q_2$. And since $f_1$ and $f_2$
                    are the minimal oblivious strategies of $PDel_1$ and $PDel_2$ respectively,
                    we have $q_1 \in f_1$ and $q_2 \in f_2$.

                    We conclude that $q \in f_1 \combi f_2$.
            \end{itemize}
        \end{proof}

  \subsection{Computing Heard-Of Predicates}

        \begin{theorem*}[(\ref{oblivOpsHO})
                        Heard-Of Predicate of Minimal Oblivious Strategies]
            Let $PDel, PDel_1, PDel_2$ be delivered predicates containing $c_{tot}$.
            Let $HO, HO_1, HO_2$ be their respective $HO$,
            and let $f, f_1, f_2$ be their respective minimal oblivious strategies.
            Then:
            \begin{itemize}
                \item The HO generated by $f_1 \cup f_2$ on $PDel_1 \cup PDel_2$,
                    and on $PDel_1 \leadsto PDel_2$ is the HO product generated
                    by $Nexts_{f_1} \cup Nexts_{f_2}$.
                \item The HO generated by $f_1 \combi f_2$ on $PDel_1 \combi PDel_2$,
                    if either $PDel_1$ or $PDel_2$ is totally symmetric,
                    is the HO product generated by
                    $\{ n_1 \cap n_2 \mid
                    n_1 \in Nexts_{f_1} \land n_2 \in Nexts_{f_2}\}$.
                \item The HO generated by $f$ on $PDel^{\omega}$ is $HO$.
            \end{itemize}
        \end{theorem*}

        \begin{proof}
            Obviously, we want to apply Lemma~\ref{carefrHO}. Then
            we first need to show that our PDels contain $c_{tot}$.
            \begin{itemize}
                \item If $PDel_1$ and $PDel_2$ contain $c_{tot}$,
                    then $PDel_1 \cup PDel_2$ trivially contains it too.
                \item If $PDel_1$ and $PDel_2$ contain $c_{tot}$,
                    then $PDel_1 \combi PDel_2$ contains
                    $c_{tot} \combi c_{tot} = c_{tot}$.
                \item If $PDel_1$ and $PDel_2$ contain $c_{tot}$,
                    then $PDel_1 \leadsto PDel_2 \supseteq PDel_2$
                    contains it too.
                \item If $PDel$ contains $c_{tot}$,
                    we can recreate $c_{tot}$ by taking all $c_i = c_{tot}$
                    and whichever $r_i$. Thus, $PDel^{\omega}$ contains
                    $c_{tot}$.
            \end{itemize}

            Next, we need to show that the $\textit{Nexts}_f$ for the strategies
            corresponds to the generating sets in the theorem.
            \begin{itemize}[label=\textbullet, font=\LARGE]
                \item We show $\textit{Nexts}_{f_1 \cup f_2} =
                    \textit{Nexts}_{f_1} \cup \textit{Nexts}_{f_2}$,
                    and thus that $PHO_{f_1 \cup f_2}(PDel_1  \cup PDel_2) =
                    \textit{HOProd}(\textit{Nexts}_{f_1 \cup f_2}) =
                    \textit{HOProd}(\textit{Nexts}_{f_1} \cup \textit{Nexts}_{f_2})$
                    \begin{itemize}
                        \item Let $n \in Nexts_{f_1 \cup f_2}$. Then $\exists
                            q \in f_1 \cup f_2: obliv(q) = n$. By definition of
                            union, $q \in f_1$ or $q \in f_2$. We fix $q \in f_1$
                            (the case $q \in f_2$ is symmetric).
                            Then $n \in Nexts_{f_1}$.

                            We conclude that
                            $n \in Nexts_{f_1} \cup Nexts_{f_2}$.
                        \item Let $n \in Nexts_{f_1} \cup Nexts_{f_2}$.
                            We fix $n \in Nexts_{f_1}$
                            (as always, the other case is symmetric).
                            Then $\exists q \in f_1: obliv(q) = n$. As
                            $q \in f_1$ implies $q \in f_1 \cup f_2$,
                            we conclude that $n \in Nexts_{f_1 \cup f_2}$.
                    \end{itemize}
                \item We show $Nexts_{f_1 \combi f_2} = \{ n_1 \cap n_2 \mid
                    n_1 \in Nexts_{f_1} \land n_2 \in Nexts_{f_2}\}$.
                    \begin{itemize}
                        \item Let $n \in Nexts_{f_1 \combi f_2}$. Then
                            $\exists q \in f_1 \combi f_2: obliv(q) = n$.
                            By definition of combination, $\exists q_1 \in f_1,
                            \exists q_2 \in f_2: q_1.round = q_2.round= q.round
                            \land q = q_1 \combi q_2$. This means
                            $n = obliv(q) = obliv(q_1) \cap obliv(q_2)$.

                            We conclude that $n \in \{ n_1 \cap n_2 \mid
                            n_1 \in Nexts_{f_1} \land n_2 \in Nexts_{f_2}\}$.
                        \item Let $n \in \{ n_1 \cap n_2 \mid
                            n_1 \in Nexts_{f_1} \land n_2 \in Nexts_{f_2}\}$.
                            Then $\exists n_1 \in Nexts_{f_1},
                            \exists n_2 \in Nexts_{f_2}: n = n_1 \cap n_2$.
                            Because $f_1$ and $f_2$ are oblivious strategies, we
                            can find $q_1 \in f_1$ such that $obliv(q_1) = n_1$,
                            $q_2 \in f_2$ such that $obliv(q_2) = n_2$, and
                            $q_1.round = q_2.round$.

                            Then $q = q_1 \combi q_2$ is a state
                            of $f_1 \combi f_2$. We have $obliv(q) =
                            n_1 \cap n_2 = n$.

                            We conclude that $n \in Nexts_{f_1 \combi f_2}$.
                    \end{itemize}
                \item Trivially, $Nexts_f = Nexts_f$.
            \end{itemize}
        \end{proof}

  \subsection{Domination by an Oblivious Strategy}

      To prove Theorem~\ref{invarSuffCfree}, we first show that the condition
      implies the domination by an oblivious strategy.

      \begin{lemma}[Sufficient Condition to be Dominated by an Oblivious Strategy]
          \label{suffDomCarefr}
          Let $PDel$ be a delivered predicate. If
          \begin{itemize}
              \item \textbf{(Total collection)}
                  \textit{PDel} contains the total collection $c_{tot}$,
              \item \textbf{(Symmetry up to a round)}
                  $\forall c \in PDel, \forall p \in \Pi,
                  \forall r > 0, \forall r' > 0, \exists c' \in PDel:
                  c'[1,r'-1] = c_{tot}[1,r'-1] \land \forall q \in \Pi:
                  c'(r',q)=c(r,p)$
          \end{itemize}
          then $PDel$ is dominated by an oblivious strategy.
      \end{lemma}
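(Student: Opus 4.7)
The plan is to show that the minimal oblivious strategy $f_{min}$ of $PDel$ (Definition~\ref{minCarefr}) in fact dominates \emph{every} valid strategy for $PDel$, not merely the oblivious ones; since $f_{min}$ is itself oblivious, this immediately yields the desired domination by an oblivious strategy. Fix any valid strategy $g$ for $PDel$ and any $h \in PHO_{f_{min}}(PDel)$; the goal is to exhibit $c \in PDel$ and a valid $t \in execs_g(c)$ with $h_t = h$. Because $c_{tot} \in PDel$, Lemma~\ref{carefrHO} applies and gives $h \in \textit{HOProd}(\textit{Nexts}_{f_{min}})$, so each $h(r,p)$ is realized as $c(r',p')$ for some witness in $PDel$.

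For each $r > 0$ and $p \in \Pi$, define the canonical state $s_{r,p} \triangleq \langle r, \{\langle r',k\rangle \mid r' < r \land k \in \Pi\} \cup \{\langle r,k\rangle \mid k \in h(r,p)\}\rangle$. The main claim is that $s_{r,p} \in g$ for every $r,p$, proved by induction on $r$. Applying symmetry-up-to-a-round to a witness of $h(r,p) \in \textit{Nexts}_{f_{min}}$ with target round $r$ yields $c^*_{r,p} \in PDel$ with $c^*_{r,p}[1,r-1] = c_{tot}[1,r-1]$ and $c^*_{r,p}(r, q) = h(r,p)$ for every $q \in \Pi$. Assuming for contradiction that $s_{r,p} \notin g$, I build an execution on $c^*_{r,p}$ in which each process advances through rounds $1,\dots,r-1$ by receiving $h(r',q)$ on time at round $r'$ (accepted by the inductive hypothesis) with remaining round-$r'$ messages delivered late; then at round $r$ the execution delivers the only round-$r$ messages $c^*_{r,p}$ provides, namely $h(r,p)$ to every process. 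All processes are now stuck at state $s_{r,p} \notin g$: this is a fair execution of $g$ on $c^*_{r,p} \in PDel$ but not a valid one, contradicting validity of $g$.

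The second step uses this claim to construct an execution of $g$ on $c_{tot}$ realizing $h$. Schedule deliveries in stages: at the start of stage $r$, every process sits at round $r$ with all previous rounds complete; first deliver $h(r,q)$ on time to each $q$, making $q$'s state exactly $s_{r,q} \in g$, then perform $next_q$; finally deliver the remaining round-$r$ messages as late deliveries to restore the invariant for stage $r+1$. By construction $h_t(r,q) = h(r,q)$ for every $r$ and $q$, so $h_t = h$; every process transitions infinitely often, so the execution is valid, and $h \in PHO_g(PDel)$.

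The main obstacle is the inductive claim $s_{r,p} \in g$. Its subtlety lies in selecting a collection $c^*_{r,p}$ under which \emph{every} process simultaneously stalls at the same state $s_{r,p}$; only such a uniform stall triggers the desired fairness-vs-validity contradiction when $s_{r,p} \notin g$, because each process's state is then never in $g$ infinitely often so fairness is vacuously satisfied while validity fails. This uniformity across processes at round $r$ is exactly what symmetry-up-to-a-round delivers, and the $c_{tot}$ prefix it enforces for rounds $< r$ is what lets the inductive schedule reuse the already-established states $s_{r',q} \in g$ to get every process to round $r$ in the first place.
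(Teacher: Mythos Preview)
The paper does not give an in-line proof of this lemma: it simply cites \cite[Thm 24]{ShimiOPODIS18}. So there is no argument in the present paper to compare against; your proposal supplies a self-contained proof where the paper defers to an external reference.

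On its own merits, your argument is sound and is the natural route. The key steps all check out: (i) Lemma~\ref{carefrHO} is applicable because of the Total-collection hypothesis, so every $h(r,p)$ lies in $\textit{Nexts}_{f_{min}}$ and admits a witness $c,r'',p''$ in $PDel$; (ii) the Symmetry-up-to-a-round hypothesis turns that witness into a collection $c^*_{r,p}$ whose prefix is $c_{tot}$ and whose round-$r$ graph is the constant $h(r,p)$, which is precisely what is needed so that \emph{every} process simultaneously reaches the single state $s_{r,p}$ and gets stuck there; (iii) padding the blocking execution with $stop$ yields an element of $execs_g(c^*_{r,p})$ that satisfies Fairness (each process's state is $s_{r,p}\notin g$ cofinally) yet is not valid, contradicting validity of $g$; (iv) the final stage-by-stage execution on $c_{tot}$ is a valid execution of $g$ with heard-of collection exactly $h$. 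The only point worth making explicit when you write it up is that in the blocking execution no process ever exceeds round $r$, so no future-round messages are present and the frozen local state really is $s_{r,p}$ for every process; this matters because $g$ is an \emph{arbitrary} strategy and could in principle look at future-round messages.
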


      \begin{proof}
          Proved in \cite[Thm 24]{ShimiOPODIS18}.
      \end{proof}

        \begin{theorem*}[(\ref{invarSuffCfree})
                         Domination by Oblivious for Operations]
            Let $PDel, PDel_1, PDel_2$ be delivered predicates that satisfy:
            \begin{itemize}
              \item \textbf{(Total collection)}
                They contains the total collection $c_{tot}$,
              \item \textbf{(Symmetry up to a round)}
                $\forall c$ a collection in the predicate, $\forall p \in \Pi,
                \forall r > 0, \forall r' > 0, \exists c'$ a collection
                in the predicate:
                $c'[1,r'-1] = c_{tot}[1,r'-1] \land \forall q \in \Pi:
                c'(r',q)=c(r,p)$
            \end{itemize}
            Then $PDel_1 \cup PDel_2$, $PDel_1 \combi PDel_2$, $PDel_1 \leadsto
            PDel_2$, $PDel^{\omega}$ satisfy the same two conditions and
            are dominated by oblivious strategies.
        \end{theorem*}

        \begin{proof}[Proof idea]
            Thanks to Lemma~\ref{suffDomCarefr}, we only have to show that
            the condition is maintained by the operations; the domination
            by an oblivious strategy follows directly.

            For containing $c_{tot}$: $c_{tot} \cup c_{tot} = c_{tot}$;
            $c_{tot} \combi
            c_{tot} = c_{tot}$; $c_{tot} \leadsto c_{tot} = c_{tot}$;
            and the succession of $c_{tot}$ with itself
            again and again gives $c_{tot}$.

            As for symmetry up to a round, we show its invariance.
            Let $p \in \Pi, r > 0$ and $r' > 0$.
            \begin{itemize}
                \item If $c \in PDel_1 \cup PDel_2$, then $c \in PDel_1 \lor
                    c \in PDel_2$. We can then apply the condition for one of
                    them to get $c'$.
                \item If $c \in PDel_1 \combi PDel_2$, then
                    $\exists c_1 \in PDel_1, \exists c_2 \in PDel_2:
                    c = c_1 \combi c_2$. Applying the condition for
                    $c_1$ and $c_2$ gives us $c'_1$ and $c'_2$, and
                    $c' = c'_1 \combi c'_2$ satisfies the condition for $c$.
                \item If $c \in PDel_1 \leadsto PDel_2$, then
                    $\exists c_1 \in PDel_1, \exists c_2 \in PDel_2,
                    \exists r_{change} \geq 0: c = c_1[1,r_{change}].c_2$.
                    Applying the condition for $c_1$ at $r$ and $r'$
                    and for $c_2$ at $r-r_{change}$ and $r' - r_{change}$
                    gives us $c'_1$ and $c'_2$, and
                    $c' = c'_1[1,r_{change}].c'_2$ satisfies the condition for $c$.
                \item If $c \in PDel^{\omega}$, then
                    $\exists (c_i)_{i \in \mathbb{N}^*},
                    \exists (r_i)_{i \in \mathbb{N}^*},$
                    the collections and indices defining $c$.
                    Then let $i$ the integer such that $r \in [r_{i}+1,r_{i+1}]$.
                    Applying the condition for $c_{i'}$ at $r-r_{i'}$ and $r'-
                    r_{i'}$  with $i' \leq i$ gives us $c'_{i'}$, and
                    $c' = c'_1[1,r_2-r_1]\cdots{}c'_i[1,r_{i+1}-r_i]
                    .c_{i+1}[1,r_{i+2}-r_{i+1}]\cdots$ satisfies the condition for
                    $c$.
            \end{itemize}
        \end{proof}

        \begin{proof}
            Thanks to Lemma~\ref{suffDomCarefr}, we only have to show that
            the condition is maintained by the operations; the domination
            by an oblivious strategy follows directly.

            We first prove that $c_{tot}$ is still in the results of
            the operations.
            \begin{itemize}
                \item If $PDel_1$ and $PDel_2$ contain $c_{tot}$,
                    then $PDel_1 \cup PDel_2$ trivially contains it too.
                \item If $PDel_1$ and $PDel_2$ contain $c_{tot}$,
                    then $PDel_1 \combi PDel_2$ contains
                    $c_{tot} \combi c_{tot} = c_{tot}$.
                \item If $PDel_1$ and $PDel_2$ contain $c_{tot}$,
                    then $PDel_1 \leadsto PDel_2 \supseteq PDel_2$
                    contains it too.
                \item If $PDel$ contains $c_{tot}$,
                    we can recreate $c_{tot}$ by taking all $c_i = c_{tot}$
                    and whichever $r_i$. Thus, $PDel^{\omega}$ contains
                    $c_{tot}$.
            \end{itemize}

            Then we show the invariance of the symmetry up to a round.
            \begin{itemize}
                \item Let $c \in PDel_1 \cup PDel_2$. Thus $c \in PDel_1$
                    or $c \in PDel_2$. We fix $c \in PDel_1$ (the other
                    case is symmetric). Then for $p \in \Pi,r > 0$ and $r' > 0$,
                    we get a $c' \in PDel_1$.
                    satisfying the condition.
                    And since $PDel_1 \subseteq PDel_1 \cup PDel_2$,
                    we get $c' \in PDel_1 \cup PDel_2$.

                    We conclude that the condition still holds
                    for $PDel_1 \cup PDel_2$.
                \item Let $c \in PDel_1 \combi PDel_2$. Then
                    $\exists c_1 \in PDel_1, \exists c_2 \in PDel_2:
                    c = c_1 \combi c_2$. For $p \in \Pi,r > 0$ and $r' > 0$,
                    our hypothesis on $PDel_1$ and $PDel_2$ ensures
                    that there are $c'_1 \in PDel_1$ satisfying the
                    condition for $c_1$ and $c_2' \in PDel_2$ satisfying
                    the condition for $c_2$.

                    We argue that $c' = c_1' \combi c'_2$ satisfies
                    the condition for $c$. Indeed, $\forall r'' < r',
                    \forall q \in \Pi: c(r'',q) = c_1'(r'',q) \combi c_2'(r'',q)
                    = \Pi$ and $\forall q \in \Pi: c(r',q) = c_1'(r',q) \combi
                    c_2'(r',q) = c_1(r,p) \combi c_2(r,p) = c(r,p)$.

                    We conclude that the condition still holds for
                    $PDel_1 \combi PDel_2$.
                \item Let $c \in PDel_1 \leadsto PDel_2$. Since if $c \in
                    PDel_2$, the condition trivially holds by hypothesis,
                    we study the case where succession actually happens.
                    Hence, $\exists c_1 \in PDel_1, \exists c_2 \in PDel_2, \exists
                    r_{change} > 0: c = c_1[1,r_{change}].c_2$.
                    For $p \in \Pi, r > 0$ and $r'>0$, we separate two cases.
                    \begin{itemize}
                        \item if $r \leq r_{change}$, then
                            our hypothesis on $PDel_1$ ensures
                            that there is $c'_1 \in PDel_1$ satisfying the
                            condition for $c_1$.
                            We argue that
                            $c' = c_1'[1,r'].c_2 \in PDel_1 \leadsto PDel_2$
                            satisfies the condition for $c$.

                            Indeed, $\forall r'' < r', \forall q \in \Pi:
                            c'(r'',q) = c_1'(r'',q) = \Pi$, and $\forall q \in \Pi:
                            c'(r',q) = c_1(r,p) = c(r,p)$
                        \item if $r > r_{change}$, then
                            our hypothesis on $PDel_2$ ensures
                            that there is $c'_2 \in PDel_2$ satisfying
                            the condition for $c_2$ at $p$ and $r - r_{change}$.
                            That is, $c'_2[1,r'-1] = c_{tot}[1,r'-1] \land
                            \forall q \in \Pi: c'_2(r',q)=c_2(r-r_{change},p)$
                            We argue that $c' = c'_2 \in PDel_1 \leadsto PDel_2$
                            satisfies the condition for $c$.

                            Indeed, $\forall r'' < r', \forall q \in \Pi:
                            c'_2(r'',q) = \Pi$, and $\forall q \in \Pi:
                            c'_2(r',q) = c_2(r-r_change,p) = c(r,p)$
                    \end{itemize}

                    We conclude that the condition still holds for
                    $PDel_1 \leadsto PDel_2$.

                \item Let $c \in PDel^{\omega}$. Let $(c_i)$ and
                    $(r_i)$ be the collections and indices defining
                    $c$. We take $p \in \Pi,r > 0$ and $r' > 0$.
                    Let $i > 0$ be the integer such that
                    $r \in [r_i+1, r_{i+1}]$. By hypothesis on $PDel$,
                    There is $c'_i \in PDel$ satisfying the condition for $c_i$
                    at $p$ and $r - r_i$.
                    That is, $c'_i[1,r'-1] = c_{tot}[1,r'-1] \land
                    \forall q \in \Pi: c'_i(r',q)=c_i(r-r_i,p)$.

                    We argue that $c'_i \in PDel$
                    satisfies the condition for $c$. Indeed,
                    $\forall r'' \leq r', \forall q \in \Pi$, we have:
                    $c'_i(r'',q) = \Pi$ and $\forall q \in \Pi:
                    c'_i(r',q) = c_i(r-r_i,p) = c(r,p)$.

                    We conclude that the condition still holds for
                    $PDel^{\omega}$.
            \end{itemize}
        \end{proof}

\section{Proofs for Conservative Strategies}
\label{app:cons}

  \subsection{Minimal Conservative Strategies}

  We use a necessary and sufficient condition for an oblivious strategy to be valid in
  the rest of the proofs.

        \begin{lemma}[Necessary and Sufficient Condition for Validity of a Conservative Strategy]
            \label{consValid}
            Let $PDel$ be a delivered predicate and $f$ be a conservative strategy.
            Then $f$ is valid for $PDel \iff
            f \supseteq \{q \in Q \mid \exists c \in PDel, \exists p \in \Pi,
            \forall r \leq q.round: q(r) = c(r,p) \}$.
        \end{lemma}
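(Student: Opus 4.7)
The statement is the conservative analogue of Lemma~\ref{carefrValid}, so the strategy is to mirror that proof while accounting for the fact that $\approx_{cons}$ tracks the round number and all messages received for rounds $\leq q.round$, rather than only the current round.

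For the $(\Longleftarrow)$ direction, assume $f$ contains the indicated set and take any $t \in execs_f(PDel)$, say $t \in execs(c)$ with $c \in PDel$. I want to show $t$ is valid. Suppose for contradiction that some process never again changes round in $t$; let $p$ be a process that gets stuck at the least such round $r$ (among all eventually-stuck processes). Any process $k$ with $k \in c(r',p)$ for some $r' \leq r$ eventually performs at least $r'-1$ \textit{next} transitions in $t$ (either by the minimality of $r$, or because $k$ never stops), so by the $\Longleftrightarrow$ in the definition of $execs(c)$ every such message is eventually delivered to $p$. Hence after finitely many transitions $p$'s conservative state $cons(q_p^t[i])$ satisfies $q_p^t[i].round = r$ and $q_p^t[i](r') = c(r',p)$ for all $r' \leq r$. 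By hypothesis this state lies in $f$, and it stays in $f$ forever (because $f$ is conservative and the conservative projection no longer changes at $p$). The fairness clause of Definition~\ref{def:execsStrat} then forces another $next_p$, contradicting the assumption that $p$ is stuck.

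For the $(\Longrightarrow)$ direction, assume $f$ is valid and pick $q \in Q$ together with $c \in PDel$ and $p \in \Pi$ witnessing $\forall r \leq q.round: q(r) = c(r,p)$. I need $q \in f$. Suppose not. Using Lemma~\ref{stCorrect}, I build a standard execution $t$ for a timing function that (i) delivers exactly the messages prescribed by $c$, and (ii) schedules the first $q.round - 1$ deliveries and \textit{next} transitions for $p$ so that after some prefix of $t$, the state $q_p^t[i]$ is $\approx_{cons}$-equivalent to $q$. Concretely, one picks a timing where every message $\langle r',k \rangle$ with $k \in c(r',p)$ and $r' \leq q.round$ is delivered to $p$ before its $(q.round)$-th round change, and then simply never lets $p$ perform another \textit{next} thereafter (the fairness clause is vacuously satisfied for $p$ because $cons(q_p^t[i])$ remains $\approx_{cons}$-equivalent to $q \notin f$ forever). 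This execution lies in $execs_f(PDel)$ but is not valid, contradicting validity of $f$.

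The main obstacle is the $(\Longleftarrow)$ direction: one must argue rigorously that ``$p$ is stuck at the least such round'' implies that every sender $k$ of a message $p$ still needs has reached its required round, which in turn requires carefully separating ``eventually stuck'' processes from those that change round infinitely often. Once this minimal-round argument is set up, everything else is a direct unfolding of the definitions of $execs(c)$, $cons$, and the fairness property of $execs_f$.
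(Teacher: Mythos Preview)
Your approach is genuinely different from the paper's. The paper does not prove this lemma from the operational definitions at all: it quotes an existing characterization of validity for conservative strategies from~\cite{ShimiOPODIS18} (namely, that every ``canonical'' conservative state $\langle r,\{\langle r',k\rangle \mid r'\le r \land k\in c(r',j)\}\rangle$ must lie in $\textit{Nexts}^R_f$) and then just checks, by unfolding $cons$, that this condition is equivalent to the inclusion stated here. So the paper's proof is a two-line set-equality argument leaning on prior work, whereas you attempt a self-contained proof from Definitions~\ref{def:execsStrat} and the definition of validity.

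Your $(\Longleftarrow)$ direction is sound; the minimal-stuck-round argument together with the $\iff$ in the definition of $execs(c)$ indeed forces $cons(q_p^t[i])$ to eventually equal the canonical state and stay there, so fairness yields the contradiction.

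Your $(\Longrightarrow)$ direction, however, has a real gap. You construct an execution in which $p$ reaches a state $\approx_{cons}$-equivalent to $q$ and then never moves, and you check fairness for $p$. But membership in $execs_f(PDel)$ requires \emph{all} processes to satisfy both ``all nexts allowed'' and fairness. Two things are missing. First, for $p$ itself you need the $q.round-1$ intermediate $next_p$ transitions to occur at states in $f$; nothing you wrote guarantees this (a minimal-round counterexample argument would fix it, but you have to say so). Second, and more seriously, the other processes $p'\neq p$ are unaccounted for: if you freeze the execution with $stop$, each $p'$ has done finitely many nexts, so fairness demands $q_{p'}^t[i]\notin f$ cofinally---which you have no control over; if instead you let them keep running, they may need round-$(q.round{+}1)$ messages from $p$ that will never be sent, and the resulting states need not match any $c'\in PDel$, so you cannot argue they are (or are not) in $f$. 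The standard execution of Lemma~\ref{stCorrect} does not help here because it never blocks any process. This part needs a more careful construction (e.g., choosing a globally minimal blocking round over all processes and arguing inductively about who can still progress), or, as the paper does, an appeal to the characterization already established in~\cite{ShimiOPODIS18}.
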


        \begin{proof}
            From the version in~\cite{ShimiOPODIS18},
            $f$ has to satisfy
            $\forall \textit{CDel} \in \textit{PDel},
            \forall r > 0, \forall j \in \Pi:
            \langle r, \{ \langle r',k \rangle \mid r' \leq r
            \land k \in \textit{CDel}(r',j)\} \rangle \in \textit{Nexts}^R_f$.

            We show the equivalence of this condition with our own,
            which allow us to conclude by transitivity of equivalence.
            \begin{itemize}
                \item $(\Longrightarrow)$ assume our condition holds and
                    prove the one from~\cite{ShimiOPODIS18},

                    Let $c \in PDel, r > 0$ and $p \in \Pi$: we want to show that
                    $q = \langle r, \{ \langle r',k \rangle \mid r' \leq r
                    \land k \in c(r',p)\} \rangle
                    \in \textit{Nexts}^R_f$. That is to say, that all states
                    whose past and present correspond to this conservative state
                    are accepted by $f$.
                    Let $q'$ such that $cons(q') = q$, that is
                    $q'.round = q.round = r$ and $\forall r' \leq r:
                    q'(r') = q(r') = c(r',p)$. We have the collection
                    and the round to apply our condition, and thus $q \in f$.
                \item $(\Longleftarrow)$ assume the condition from~\cite{ShimiOPODIS18} holds and we prove ours.

                    Let $q$ such that $\exists c \in PDel,
                    \exists p \in \Pi, \forall r \leq q.round: q(r) = c(r,p)$.

                    Then $cons(q) =
                    \langle q.round, \{ \langle r,k \rangle \mid r \leq q.round
                    \land k \in c(r,p)\} \rangle$. This conservative state
                    is in $\textit{Nexts}^R_f$ by hypothesis.

                    We conclude that $q \in f$.
            \end{itemize}
        \end{proof}

        \begin{lemma*}[(\ref{minDom})
                       Domination of Minimal Conservative Strategy]
            Let $PDel$ be a delivered predicate and $f_{min}$
            be its minimal conservative strategy.
            Then $f_{min}$ dominates the conservative strategies for $PDel$.
        \end{lemma*}

        \begin{proof}
            First, $f_{min}$ is valid for $PDel$ by application of
            Lemma~\ref{consValid}.
            Next, we take another conservative strategy $f$, valid
            for $PDel$. Lemma~\ref{consValid} gives us that
            $f_{min} \subseteq f$. Hence, when
            $f_{min}$ allow a change of round, so does $f$.
            This entails that all executions of $f_{min}$ for
            $PDel$ are also executions of $f$ for $PDel$,
            and thus that the $PHO_{f_{min}}(PDel) \subseteq
            PHO_f(PDel)$.
        \end{proof}

    \subsection{Operations Maintain Minimal Conservative Strategy}

        \begin{theorem*}[(\ref{unionReac}) Minimal Conservative Strategy for Union]
            Let $PDel_1, PDel_2$ be two PDels, $f_1$ the minimal
            conservative strategy for $PDel_1$, and $f_2$ the
            minimal conservative strategy for $PDel_2$.
            Then $f_1 \cup f_2$ is the minimal conservative strategy for
            $PDel_1 \cup PDel_2$.
        \end{theorem*}

        \begin{proof}
            We only have to show that $f_1 \cup f_2$ is equal to
            Definition~\ref{minReac}.
            \begin{itemize}
                \item $(\supseteq)$ Let $q$ be a state such that
                    $\exists c \in PDel1 \cup PDel_2, \exists p \in \Pi$ such
                    that $\forall r \leq q.round: q(r) = c(r,p)$.
                    If $c \in PDel_1$, then $q \in f_1$,
                    because $f_1$ is the minimal conservative strategy
                    for $PDel_1$, and by application of Lemma~\ref{consValid}.
                    Thus, $q \in f_1 \cup f_2$.
                    If $c \in PDel_2$, the same reasoning apply with
                    $f_2$ in place of $f_1$.
                    We conclude that $q \in f_1 \cup f_2$.
                \item $(\subseteq)$ Let $q \in f_1 \cup f_2$.
                    This means that $q \in f_1 \lor q \in f_2$. The case where
                    it is in both can be reduced to any of the two.
                    If $q \in f_1$, then by minimality of $f_1$
                    $\exists c_1 \in PDel_1, \exists p_1 \in \Pi$ such that
                    $\forall r \leq q.round: q(r) = c_1(r,p_1)$.
                    $PDel_1 \subseteq
                    PDel_1 \cup PDel_2$, thus $c_1 \in PDel_1 \cup PDel_2$.
                    We found the $c$ and $p$ necessary to show $q$ is
                    in the minimal conservative strategy for $PDel_1 \cup PDel_2$.
                    If $q \in f_2$, the reasoning is similar to
                    the previous case, replacing $f_1$ by $f_2$
                    and $PDel_1$ by $PDel_2$.
            \end{itemize}
        \end{proof}

        \begin{theorem*}[(\ref{combiReac})
                        Minimal Conservative Strategy for Combination]
            Let $PDel_1, PDel_2$ be two symmetric PDels, $f_1$ the minimal
            conservative strategy for $PDel_1$, and $f_2$ the
            minimal conservative strategy for $PDel_2$.
            Then $f_1 \combi f_2$ is the minimal conservative strategy for
            $PDel_1 \combi PDel_2$.
        \end{theorem*}

        \begin{proof}
            We only have to show that $f_1 \combi f_2$ is equal to
            Definition~\ref{minReac}.
            \begin{itemize}
                \item $(\supseteq)$ Let $q$ be a state such that
                    $\exists c \in PDel1 \combi PDel_2, \exists p \in \Pi$ such
                    that $\forall r \leq q.round: q(r) = c(r,p)$.
                    By definition of $c$, $\exists c_1 \in PDel_1,
                    \exists c_2 \in PDel_2: c_1 \combi c_2 = c$.
                    We take $q_1$ such that $q_1.round = q.round$ and
                    $\forall r > 0:\\
                    \left(
                    \begin{array}{ll}
                        q_1(r) = c_1(r,p) &\text{if }r \leq q.round\\
                        q_1(r) = q(r) &\text{otherwise}\\
                    \end{array}
                    \right)$. We also take $q_2$ such that $q_2.round = q.round$
                    and $\forall r > 0:
                    \left(
                    \begin{array}{ll}
                        q_2(r) = c_2(r,p) &\text{if }r \leq q.round\\
                        q_2(r) = q(r) &\text{otherwise}\\
                    \end{array}
                    \right)$.

                    Then by validity of $f_1$ and $f_2$ (since they are minimal
                    conservative strategies) and by application of
                    Lemma~\ref{consValid}, we get $q_1 \in f_1$ and $q_2 \in f_2$.
                    We also see that $q = q_1 \combi q_2$. Indeed, for
                    $r \leq q.round$, we have $q(r) = c(r,p) = c_1(r,p) \cap c_2(r,p) =
                    q_1(r) \cap q_2(r)$; and for $r > q.round$, we have
                    $q(r) = q(r) \cap q(r) = q_1(r) \cap q_2 (r)$.

                    Therefore $q \in PDel_1 \combi PDel_2$.
                \item $(\subseteq)$
                    Let $q \in f_1 \combi f_2$. By definition of $f_1 \combi f_2$,
                    $\exists q_1 \in f_1, \exists q_2 \in f_2$ such that $q_1.round
                    = q_2.round = q.round$ and $q = q_1 \combi q_2$.

                    Since $f_1$ and $f_2$ are minimal conservative strategies of their
                    respective PDels, $\exists c_1 \in PDel_1, \exists p_1 \in \Pi$
                    such that $\forall r \leq q.round: q_1(r) = c_1(r,p_1)$; and
                    $\exists c_2 \in PDel_2, \exists p_2 \in \Pi$ such that
                    $\forall r \leq q.round: q_2(r) = c_2(r,p_2)$.

                    By symmetry of $PDel_2$, $\exists c'_2 \in PDel_2$
                    such that $\forall r \leq q.round: c'_2(r,p_1) = c_2(r,p_2)$.
                    Hence, $\forall r \leq q.round: q_2(r) = c'_2(r,p_1)$.

                    By taking $c = c_1 \combi c_2$, we get
                    $\forall r \leq q.round: q(r) = q_1(r) \cap q_2(r)
                    = c_1(r,p_1) \cap c_2(r,p_1) = c(r,p_1)$.

                    We found $c$ and $p$ showing that $q$ is in
                    the minimal conservative strategy for $PDel_1 \combi PDel_2$.
            \end{itemize}
        \end{proof}

        \begin{theorem*}[(\ref{succReac})
                         Minimal Conservative Strategy for Succession]
            Let $PDel_1, PDel_2$ be two symmetric PDels, $f_1$ the minimal
            conservative strategy for $PDel_1$, and $f_2$ the
            minimal conservative strategy for $PDel_2$.
            Then $f_1 \leadsto f_2$ is the minimal conservative strategy for
            $PDel_1 \leadsto PDel_2$.
        \end{theorem*}

        \begin{proof}
            We only have to show that $f_1 \leadsto f_2$ is equal to
            Definition~\ref{minReac}.
            \begin{itemize}
                \item $(\supseteq)$ Let $q$ be a state such that
                    $\exists c \in PDel1 \leadsto PDel_2, \exists p \in \Pi$ such
                    that $\forall r' \leq q.round: q(r') = c(r',p)$.
                    By definition of $c$, $\exists c_1 \in PDel_1,
                    \exists c_2 \in PDel_2, \exists r > 0: c = c_1[1,r].c_2$.
                    \begin{itemize}[label=\textbullet, font=\LARGE]
                        \item If $r = 0$, then $c[1,r] = c_2[1,r]$, and thus
                            $\forall r' \leq q.round: q(r') = c_2(r',p)$.
                            The validity of $f_2$ and Lemma~\ref{consValid}
                            then allow us to conclude that $q \in f_2$ and thus
                            that $q \in f_1 \leadsto f_2$.
                        \item If $r > 0$, we have two cases to consider.
                            \begin{itemize}
                                \item If $q.round \leq r$, then
                                    $\forall r' \leq q.round: q(r') = c_1(r',p)$
                                    We conclude by $f_1$ and application of
                                    Lemma~\ref{consValid} that $q \in f_1$ and thus
                                    that $q \in f_1 \leadsto f_2$.
                                \item If $q.round > r$, then $c[1,q.round]
                                    = c_1[1,r].c_2[1,q.round-r]$.

                                    We take $q_1$ such that $q_1.round = r$ and
                                    $\forall r' > 0:\\
                                    \left(
                                    \begin{array}{ll}
                                        q_1(r') = c_1(r',p) &\text{if }r' \leq q_1.round\\
                                        q_1(r') = q(r') &\text{otherwise }\\
                                    \end{array}
                                    \right)$. We also take $q_2$ such that
                                    $q_2.round = q.round - r$
                                    and $\forall r' > 0:
                                    \left(
                                    \begin{array}{ll}
                                        q_2(r') = c_2(r',p) &\text{if }r' \leq q_2.round\\
                                        q_2(r') = q(r'-q.round) &\text{otherwise}\\
                                    \end{array}
                                    \right)$.

                                    Then by validity of $f_1$ and $f_2$,
                                    and by application of Lemme~\ref{consValid},
                                    we get $q_1 \in f_1$ and $q_2 \in f_2$.
                                    We also see that $q = q_1 \leadsto q_2$.
                                    Indeed, for
                                    $r' \leq q_1.round = r$, we have $q(r') = c(r',p) =
                                    c_1(r',p) = q_1(r')$; for
                                    $r' \in [q_1.round + 1, q.round]$, we have
                                    $q(r') = c(r',p) = c_2(r'-r,p) = q_2(r'-r)$ and
                                    for $r' > q.round$ we have
                                    $q(r') = q_2 (r'-q.round)$.

                                    We conclude that $q \in f_1 \leadsto f_2$.
                            \end{itemize}
                    \end{itemize}
                \item $(\subseteq)$ Let $q \in f_1 \leadsto f_2$.
                    By definition of succession for strategies, there are
                    three possibilities for $q$.
                    \begin{itemize}
                        \item If $q \in f_1$, then by minimality of $f_1$
                            $\exists c_1 \in PDel_1, \exists p_1 \in \Pi:
                            \forall r \leq q.round: q(r) = c_1(r,p_1)$.
                            Let $c_2 \in PDel_2$. We take
                            $c = c_1[1,q.round].c_2$; we have $c \in c_1 \leadsto c_2$.

                            Then, $\forall r \leq q.round:
                            q(r)=c_1(r,p_1)=c(r,p_1)$.
                            We found $c$ and $p$
                            showing that $q$ is in the minimal conservative strategy for
                            $PDel_1 \leadsto PDel_2$.
                        \item If $q \in f_2$, then by minimality of $f_2$
                            $\exists c_2 \in PDel_2, \exists p_2 \in \Pi:
                            \forall r \leq q.round: q(r) = c_2(r,p_2)$.
                            As $PDel_2 \subseteq PDel_1 \leadsto PDel_2$, thus
                            $c_2 \in PDel_1 \leadsto PDel_2$.

                            We found $c$ and $p$
                            showing that $q$ is in the minimal conservative strategy for
                            $PDel_1 \leadsto PDel_2$.
                        \item There are $q_1 \in f_1$ and $q_2 \in f_2$ such that
                            $q = q_1 \leadsto q_2$.

                            Because $f_1$ and $f_2$ are the minimal conservative strategies
                            of their respective PDels, $\exists c_1 \in PDel_1,
                            \exists p_1 \in \Pi$ such that
                            $\forall r \leq q.round: q_1(r) = c_1(r,p_1)$; and
                            $\exists c_2 \in PDel_2, \exists p_2 \in \Pi$ such that
                            $\forall r \leq q.round: q_2(r) = c_2(r,p_2)$.

                            By symmetry of $PDel_2$, $\exists c'_2 \in PDel_2:
                            \forall r \leq q.round: c'_2(r,p_1) = c_2(r,p_2)$.
                            Hence, $\forall r \leq q.round: q_2(r) = c'_2(r,p_1)$.

                            By taking $c = c_1[1,q_1.round].c'_2$, we have
                            $c \in c_1 \leadsto c'_2$.
                            Then $\forall r \leq q.round
                            = q_1.round + q_2.round:\\
                            \left(
                            \begin{array}{ll}
                                \begin{array}{ll}
                                    q(r) & = q_1(r)\\
                                         & = c_1(r,p_1)\\
                                         & = c(r,p_1)\\
                                \end{array}
                                    & \text{if }r \leq q_1.round\\
                                \begin{array}{ll}
                                    q(r) & = q_2(r-q_1.round)\\
                                         & = c'_2(r-q_1.round,p_1)\\
                                         & = c(r,p_1)\\
                                \end{array}
                                    & \text{if }r \in [q_1.round+1,q_1.round+ q_2.round]\\
                            \end{array}
                            \right)$.

                            We found $c$ and $p$
                            showing that $q$ is in the minimal conservative strategy for
                            $PDel_1 \leadsto PDel_2$.
                    \end{itemize}
            \end{itemize}
        \end{proof}

        \begin{theorem*}[(\ref{repetReac})
                        Minimal Conservative Strategy for Repetition]
            Let $PDel$ be a symmetric PDel, and $f$ be its minimal conservative strategy.
            Then $f^{\omega}$ is the minimal conservative strategy for
            $PDel^{\omega}$.
        \end{theorem*}

        \begin{proof}
            We only have to show that $f^{\omega}$ is equal to
            Definition~\ref{minReac}.
            \begin{itemize}
                \item $(\supseteq)$ Let $q$ be a state such that
                    $\exists c \in PDel^{\omega}, \exists p \in \Pi$ such
                    that $\forall r \leq q.round: q(r) = c(r,p)$.
                    By definition of repetition, $\exists
                    (c_i)_{i \in \mathbb{N}^*}, \exists
                    (r_i)_{i \in \mathbb{N}^*}$ such that
                    $r_1 = 0$ and $\forall i \in \mathbb{N}^*:
                    (c_i \in PDel \land r_{i} < r_{i+1} \land
                    c[r_i+1,r_{i+1}]=c_i[1,r_{i+1} - r_i])$.

                    Let $k$ be the biggest integer such that $r_k \leq q.round$. We
                    consider two cases.
                    \begin{itemize}
                        \item If $r_k = q.round$, then
                            $c[1,r] = c_1[1,r_2-r_1].c_2[1,r_3-r_2]...c_{k-1}[1,r_k-r_{k-1}]$.
                            We take for $i \in [1,k-1]: q_i$ the state such that
                            $q_i.round = r_{i+1} -r_i$ and
                            $\forall r > 0:\\
                            \left(
                            \begin{array}{ll}
                                q_i(r) = c_i(r,p)
                                    &\text{if }r \leq q_i.round\\
                                q_i(r) = q(r+\sum\limits_{j \in [1,i-1]} q_i.round)
                                    &\text{otherwise }\\
                            \end{array}
                            \right)$.

                            By validity of $f$ and by application of
                            Lemma~\ref{consValid}, for $i \in [1,k-1]$
                            we have $q_i \in f$.
                            We see that $\forall r > 0: q(r) = (q_1 \leadsto ...
                            \leadsto q_{k-1})(r)$. Indeed, $\forall r \in
                            [r_i + 1, r_{i+1}]: q(r) = c(r,p) = c_i(r-r_i,p) =
                            q_i(r-r_i)$; and for $r > q.round: q(r) =
                            q_{k-1}(r-\sum\limits_{j \in [1,k-1]} q_i.round)$.

                            We conclude that $q \in f^{\omega}$.
                        \item If $q.round > r_k$, we can apply the same reasoning
                            as in the previous case, the only difference being
                            $c[1,r] =
                            c_1[1,r_2-r_1].c_2[1,r_3-r_2]...
                            c_{k-1}[1,r_k-r_{k-1}].c_k[1,r-r_k]$.
                    \end{itemize}
                \item $(\subseteq)$ Let $q \in f^{\omega}$.
                    By definition of $f^{\omega}$, $\exists q_1,q_2,...,q_k
                    \in f: q = q_1 \leadsto q_2 \leadsto ... \leadsto
                    q_k$.

                    Since $f$ is the minimal conservative strategy of $PDel$,
                    $\exists c_1, c_2, ..., c_k \in PDel, \exists
                    p_1,p_2,...,p_k \in \Pi:
                    \forall i \in [1,k]
                    q_i = \langle q_i.round, \{\langle r,j \rangle \mid
                    r \leq q_i.round \land j \in c_i(r,p_i) \}$.

                    By symmetry of $PDel$, for all $i \in [2,k]$,
                    $\exists c'_i \in PDel, \forall r \leq q_i.round:
                    c'_i(r,p_1) = c_i(r,p_i)$.

                    We take $c = c_1[1,q_1.round].c'_2[1,q_2.round]...
                    c'_{k-1}[1,q_{k-1}.round].c'_k$,
                    thus $c \in c_1 \leadsto c'_2 \leadsto ... \leadsto c'_k$.
                    Then $\forall r \leq q.round
                    = \sum\limits_{i \in [1,k]} q_i.round$, if
                    $r \in [\sum\limits_{i \in [1,i-1]} q_i.round + 1,
                            \sum\limits_{i \in [1,i]} q_i.round]$,
                    we have
                    $\left(
                    \begin{array}{ll}
                        q(r) & = q_i(r - \sum\limits_{i \in [1,i-1]} q_i.round)\\
                             & = c_i(r - \sum\limits_{i \in [1,i-1]} q_i.round,p_1)\\
                             & = c(r,p_1)\\
                    \end{array}
                    \right)$.

                    We found $c$ and $p$
                    showing that $q$ is in the minimal conservative strategy for
                    $PDel^{\omega}$.
            \end{itemize}
        \end{proof}

    \subsection{Computing Heard-Of Predicates}

        \begin{theorem*}[(\ref{upperBoundHO})
                           Upper Bounds on HO of Minimal Conservative Strategies]
            Let $PDel, PDel_1, PDel_2$ be PDels containing $c_{tot}$.
            Let $f^{cons}, f_1^{cons}, f_2^{cons}$ be their respective minimal
            conservative strategies, and $f^{obliv}, f_1^{obliv}, f_2^{obliv}$
            be their respective minimal oblivious strategies.
            Then:
            \begin{itemize}
                \item $PHO_{f_1^{cons} \cup f_2^{cons}}(PDel_1 \cup PDel_2)
                    \subseteq \textit{HOProd}(\textit{Nexts}_{f_1^{obliv}} \cup
                    \textit{Nexts}_{f_2^{obliv}})$.
                \item $PHO_{f_1^{cons} \leadsto f_2^{cons}}(PDel_1 \leadsto PDel_2)
                    \subseteq \textit{HOProd}(\textit{Nexts}_{f_1^{obliv}} \cup
                    \textit{Nexts}_{f_2^{obliv}})$.
                \item $PHO_{f_1^{cons} \combi f_2^{cons}}(PDel_1 \combi PDel_2)
                    \subseteq \textit{HOProd}( \{ n_1 \cap n_2 \mid
                    n_1 \in \textit{Nexts}_{f_1^{obliv}} \land
                    n_2 \in \textit{Nexts}_{f_2^{obliv}}\})$.
                \item $PHO_{(f^{cons})^\omega}(PDel^{\omega})
                    \subseteq \textit{HOProd}(\textit{Nexts}_{f^{obliv}})$.
            \end{itemize}
        \end{theorem*}

        \begin{proof}
            A oblivious strategy is a conservative strategy. Therefore,
            the minimal conservative strategy always dominates
            the minimal oblivious strategy. Hence, we get an upper bound
            on the heard-of predicate of the minimal conservative strategies
            by applying Theorem~\ref{oblivOpsHO}.
        \end{proof}

\end{document}